\definecolor{ibm0}{HTML}{648FFF}
\definecolor{ibm1}{HTML}{785EF0}
\definecolor{ibm2}{HTML}{DC267F}
\definecolor{ibm3}{HTML}{FE6100}
\definecolor{ibm4}{HTML}{FFB000}
\pgfplotsset{compat=newest}
\newcommand{\cmark}{\ding{51}}
\newcommand{\xmark}{\ding{55}}
\def\E{ {\mathcal E} }
\def\P{ {\mathcal{P}} }
\def\R{ {\mathcal R} }
\def\G{ {\mathcal G} }
\def\N{ {\mathcal N} }
\def\F{ {\mathcal F} }
\def\O{ {\mathcal O} }
\def\D{ {\mathcal D} }
\def\S{ {\mathbb{S}} }
\def\T{ {\mathbb{T}} }
\def\I{ {\mathcal I} }
\def\gL{ {\mathbb{L}} }
\def\gP{ {\mathbb{P}} }
\def\gS{ {\mathbb{S}} }
\def\s{\boldsymbol{s}}
\def\m{\boldsymbol{m}}
\def\e{\boldsymbol{e}}
\def\d{\mathbf{D}}
\def\Tr{{\rm{tr}}}
\def\tr{ {\rm{tr}} }
\def\>{\rangle}
\def\<{\langle}
\def\hc{^{\dagger}}
\renewcommand{\emph}{\textit}
\newcommand{\iden}{\mathbb{I}}
\newtheorem{theorem}{Theorem}
\newtheorem{lemma}[theorem]{Lemma}
\newcommand{\nkd}[3]{\left \llbracket #1,\, #2,\, #3 \right \rrbracket}
\begin{document}
	\title{Characterization of syndrome-dependent logical noise in detector regions}
	\author{Matthew Girling, Ben Criger, Cristina C\^irstoiu} 
	\affiliation{Quantinuum, 13-15 Hills Road, CB2 1NL, Cambridge, United Kingdom}
	\date{August 12, 2025}
\begin{abstract}
Characterizing how quantum error correction circuits behave under realistic hardware noise is essential for testing the premises that enable scalable fault tolerance.
Logical error rates conditioned on syndrome outcomes are needed to enable noise-aware decoding and validate threshold-relevant assumptions.
We introduce a protocol to directly estimate the logical Pauli channels (and pure errors) associated with detector regions formed of two or more syndrome extraction gadgets, conditioned on observing a particular parity in the syndrome outcomes.
The method is SPAM-robust and most suitable for flag-based syndrome measurement schemes.
For classical processing of the experimental data we implement a Bayesian modelling approach.
We validate this new protocol on a small error-detecting code using Quantinuum H1-1, a trapped-ion device, and demonstrate that several noise diagnostic tests for fault tolerance improve significantly when using noise tailoring and mitigation strategies, such as swapped measurements for leakage protection, and Pauli frame randomization. 
\end{abstract}
\maketitle
\section{Introduction}
Fault-tolerant quantum computations are essential to implement large-scale quantum algorithms that enable up-to-exponential quantum speed-ups in a range of applications from simulations of materials to molecular systems \cite{preskill2025beyond}. 
These computations require logical quantum information to be redundantly encoded into physical qubits, with periodic non-destructive measurements resulting in classical outcomes -- \emph{syndromes} --  that indicate the presence of errors and are used by a decoder to infer corrections.
Research into quantum error correction (QEC) \cite{shor1995scheme,gottesman1997stabilizer,knill1997theory,steane1996error,aharonov1997fault, raussendorf2007fault,knill1998resilient, gottesman2016surviving} provides asymptotic guarantees that noise can be exponentially suppressed with poly-logarithmic overhead in device size and execution time.

Hardware advances across different modalities of quantum computers have led to experimental demonstrations of several key QEC components  \cite{chiaverini2004realization,acharya2024quantum,ni2023beating,ofek2016extending,ryan2021realization,postler2022demonstration}. 
Device-specific noise characteristics can strongly alter the effectiveness of a given error-correcting code, the optimal decoding strategy and practical fault tolerance thresholds.
Testing whether current quantum devices satisfy the assumptions required for scalable fault tolerance is essential, as violations can undermine reliability of larger quantum error-corrected computations. 

In order to assess whether a given QEC circuit performs as expected, a characterization protocol should possess two key properties. 
Firstly, such a protocol should be \emph{direct}, inferring logical-level error models using experimental data derived from QEC circuits while making as few additional assumptions (about, e.g. physical-level errors) as possible.
Secondly, QEC component characterization should be \emph{syndrome-dependent}, producing logical error models that are conditioned on the measurement outcomes obtained at runtime. 
These two properties are valuable, as they are required to perform post-selection based on syndrome data \cite{bluvstein2024logical} and soft decoding within a concatenated code \cite{yoshida2024concatenate, knill1996concatenated, meister2024efficient, goto2024many}.
Perhaps most importantly, they allow us to evaluate the validity of premises used within QEC (such as the weight distribution of Pauli errors) using real-world data.
We call characterization methods that satisfy both of these desiderata logical syndrome dependent (LSD) protocols.

There are a variety of existing approaches to characterizing logical errors in QEC gadgets. 
One simple and syndrome-dependent (though indirect) approach to estimating logical error rates is to extrapolate from physical-level characterization using classical simulations \cite{ryan2021realization}.
These methods implicitly assume that errors that occur outside individual gates (such as memory error and crosstalk) are negligibly small, potentially resulting in underestimates of logical error rates.
Furthermore, in order for the fitting process to be tractable, it is typically assumed that leakage errors (where the state of a qubit leaves the computational sub-space) are consistently detected or mitigated \cite{ceasura2022non,flammia2021averaged}, that correlations are limited in strength \cite{wagner2023learning, hockings2025scalable}, and that noise can be faithfully converted into a stochastic Pauli channel form using randomization sub-routines \cite{wallman2016noise, ware2021experimental, beale2023randomized, beale2023randomized2}.
One such protocol, averaged circuit eigenvalue sampling (ACES) \cite{flammia2021averaged}, can simultaneously estimate Pauli channels associated with different operations in a circuit, and has been applied to gate layers as used in syndrome extraction gadgets \cite{fazio2025characterizing, hockings2025scalable, flammia2021averaged, flammia2020efficient, harper2021fast}.
This protocol is scalable, performing well for high distance and qubit numbers, but assumes circuit-level noise models that do not include noise from mid-circuit measurements and resets within the gadget.

An alternative approach to characterizing QEC gadgets is logical randomized benchmarking \cite{combes2017logical}. 
This method is direct (making fewer assumptions about the noise at the physical level) but, after applying a possible correction, the syndrome outcomes are discarded and are not individually incorporated in the estimated average rates.
In addition, logical randomized benchmarking protocols require the implementation of logical Clifford operations, and typically assume that errors in these operations will be gate-independent (though gate-dependent noise has already been addressed in physical RB \cite{wallman2018randomized, helsen2022general}). 

In this work, we directly characterize logical noise and optimal corrections associated with QEC gadgets in detail, conditioned on the syndromes obtained.
Using a unified framework of detectors, we give a SPAM-robust tomographic protocol that simultaneously characterizes sets of Pauli channels associated with detector outcomes given by parities of consecutive syndrome extraction gadgets.
We refer to this particular LSD protocol as LSD-DRT (detector region tomography).
LSD-DRT largely fits into the general framework of randomized benchmarking and (similarly to ACES and cycle benchmarking \cite{erhard2019characterizing}) it requires enforcing a stochastic Pauli error model.
In our case, we analyse extended regions that incorporate two or more stabilizer measurements (including mid-circuit measurement and reset) as well as partial classical information from those measurements that is then used in the estimation procedure.
Our analysis applies also to adversarial noise that goes beyond a typical circuit-level model and may include space-time correlations \emph{within} a gadget (but is otherwise i.i.d. between different gadgets).
In addition, we do not necessarily require local stabilizers.
While the number of measurement settings we require scales exponentially with the  number of logical qubits (similar to other tomographic protocols), we are able to determine not only the distribution of logical errors introduced by a detector region, but also those associated with a pure error (i.e. equivalent up to a stabilizer).  
We use variable-length sequences of syndrome extraction gadgets, from which we construct detectors out of separate pairs of gadgets, and present both frequentist and Bayesian approaches for classical post-processing.
We demonstrate the results for the $[[2,1,1]]$ $X$-error-detecting code using the Quantinuum H1-1 device, and for a $[[4,2,2]]$ code using an emulator.

This experimental demonstration requires that we use several noise-tailoring strategies to ensure the assumption of stochastic Pauli errors is met on average.
We use Pauli frame randomization to twirl coherent errors, and regularly swap ancillas with data qubits in order to limit the accumulation of leakage errors. 
To determine the efficacy of these techniques, we use three diagnostic tests that may be of independent interest, see Section~\ref{sec:noisediagnostics}.

Applying the noise mitigation techniques discussed above to the $[[2,1,1]]$ $X$-error-detecting code, using LSD-DRT, we show a SPAM-robust estimate of logical error rate per two syndrome extraction gadgets with trivial detector outcome of $p(X) = 2(\pm 8) \cdot 10^{-4}  $, $p(Y) = 1(\pm 9) \cdot 10^{-4}  $, $p(Z) = 37(\pm 8) \cdot 10^{-4} $, using post-selection and of $p(X) = 10(\pm 12) \cdot 10^{-4}  $, $p(Y) = 8(\pm 12) \cdot 10^{-4}  $, $p(Z) = 39(\pm 12) \cdot 10^{-4} $ using an ideal decoder.
This shows that when the $[[2,1,1]]$ code is used as a low-level code in a concatenated scheme, it can not only provide a long-lived QEC memory \cite{dasu2025order}, but with suitable noise mitigation, it can do so with improved logical fidelities.

\section{Syndrome extraction gadgets and their conditional logical error probabilities}\label{sec:conditional-qec-channels}

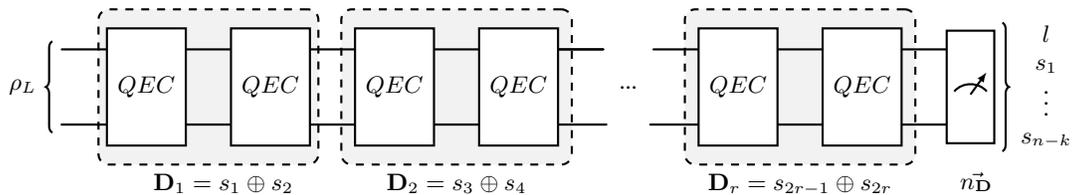
\begin{figure*}[t]
        \centering
		\begin{quantikz}[row sep={1cm,between origins}, column sep=0.6cm]
				\lstick[2]{$\rho_{L}$}  & \gate[wires=2]{QEC} \gategroup[2,steps=2, style={dashed,rounded
					corners,fill=gray!10, inner 
					xsep=0pt},background,label style={label
					position=below,anchor=north, yshift =-0.2cm, xshift=0.2cm}]{{\sc $\d_1 = s_1 \oplus s_2$}}  & \gate[wires=2]{QEC}   & \gate[wires=2]{QEC} \gategroup[2,steps=2, style={dashed,rounded
					corners,fill=gray!10, inner 
					xsep=2pt},background,label style={label
					position=below,anchor=north, yshift =-0.2cm}]{{\sc $\d_{2} = s_3 \oplus s_4$}}  & \gate[wires=2]{QEC}  &\qw \midstick[2,brackets=none]{...} &  \gate[wires=2]{QEC} \gategroup[2,steps=2, style={dashed,rounded
					corners,fill=gray!10, inner 
					xsep=2pt},background,label style={label
					position=below,anchor=north, yshift =-0.2cm}]{{\sc $\d_{r} = s_{2r-1} \oplus s_{2r}$}}    & \gate[wires=2]{QEC}  &  \meter[2]{}  \gategroup[wires=2,steps
		            =1,style={inner sep=6pt, draw=none}, label style = {label position=below, anchor=north,yshift=-0.1cm}]{ \ \ $\vec{n_{\d}}$}    \rstick[2]{ \hspace{0.6cm} $l$ \\ \hspace{0.6cm} $s_1$ \\ \hspace{0.6cm} $\vdots$ \\ \hspace{0.6cm} $s_{n-k}$} \\
		            & \qw  & \qw & \qw & \qw &\qw &\qw  &  \qw & \qw 
		\end{quantikz}
		\caption{ \textbf{Grouping detectors from independent syndrome pairs in sequences of repeated QEC.}
		Repeatedly applying a syndrome extraction gadget $2r$ times results in classical outcomes $\s^1,\ldots,\s^{2r}$.
		From these, by taking the XOR of consecutive syndromes we can obtain the detectors.
		However, in this case, we construct the detectors $\d_1,\ldots, \d_r$ by pairing syndromes disjointly such that any syndrome outcome is not used in more than a single detector.  
		This ensures independence between detectors for circuit level Pauli errors.
		We give a protocol (see Section~\ref{sec:protocol}) to perform a SPAM robust characterization of (Pauli twirled) error channels associated with each outcome of these independent detectors.
		For each shot, we prepare a $+1$ eigenstate $\rho_L$ of a logical Pauli operator $L$, apply a sequence of independent detectors -- recording the total number of each detector outcome  $\vec{n_{\d}} = (n_{\d=\bf{0}}, ...)$ -- before destructively measuring $L$ to obtain $l \in \{0,1\}$ and a complete set of stabilizer generators ($s_1, ..., s_{n-k}$) which allow us to simultaneously measure any logical operator of the form $Q=LS$ for $S \in \gS$, the stabilizer group.
	}
	\label{fig:independent-detectors}
\end{figure*}

An $\nkd{n}{k}{d}$ stabilizer code is described by a commuting sub-group $\S\subset \P_n$ of the Pauli group on $n$ qubits that is generated as $\S= \<S_1,\ldots, S_{n-k}\>$, encoding $k$ logical qubits into $n$ physical qubits. 
Since all elements in the stabilizer group $\S$ commute, they can be measured simultaneously and the joint $+1$ eigenstates determine the codespace of the encoded $k$ logical qubits.
A choice of basis for the codespace may also be specified by a unitary encoding matrix $U_{\textrm{enc}}$ applied to computational basis states $|z_1\ldots z_k\>\otimes |0\>^{n-k}$.
Then, under the encoder's adjoint action, the logical Pauli operators $\gL$ are the image of the Pauli group on the first $k$ qubits (i.e. $U_{\textrm{enc}}( P_{1\ldots k} \otimes I \otimes \ldots\otimes I )U_{\textrm{enc}}^\dagger$).
Equivalently, each element of $\gL$ is a representative of a coset of the stabilizer group within the normalizer group (the set of Pauli operators that commute with all stabilizers).

A projective measurement of a stabilizer $S \in \S$ is given by $\Pi_{s} = \frac{\iden + (-1)^s S}{2}$, with two possible outcomes: $s=0$ projects into the $+1$ eigenspace of $S$ and $s=1$ into the $-1$ eigenspace of $S$.
More generally, projective measurements of all generators of $\S$ will be given by $\Pi_{\bf{s}} =  \prod_{i=1}^{n-k}\frac{ (\iden + (-1)^{s_i} S_i)}{2}$ where ${\bf{s}} = (s_1, \ldots, s_{n-k})$.

The syndrome of an $n$-qubit Pauli error $P$ is given by its commutation relations with each of the generators, namely by the bitstring ${\bf{e}}_P := (\omega(P,S_1), \ldots , \omega(P,S_{n-k}))$.
Here, we use the notation  $\omega(A,B)$ for the bi-character (symplectic product) of two Pauli operators $A, B$, given by $\omega(A,B) = 0 $ iff $[A,B] =0$ and $\omega(A,B) = 1$  iff $ \left \lbrace A,\, B \right \rbrace = 0$.
In particular, given that an error $P$ occurred on a state within the codespace, then the outcome of the stabilizer measurements will deterministically be $\mathbf{e}_{P}$.
Formally, we have the more general useful relation
\begin{align}
P \Pi_{\bf{s}} = \Pi_{\bf{s\oplus e_P}} P.
\end{align}
There are $2^{n-k}$ different syndromes and each will correspond to a coset of $2^{n+k}$ Pauli errors that share that particular syndrome, because any product of a logical operator and stabilizer has trivial syndrome.
We choose a representative for each coset, which we will refer to as a pure error \cite{poulin2006optimal}.
We denote the set of pure errors by $\T$.

For a fixed choice of $\T$ and $\gL$ then any $n$-qubit Pauli operator can be uniquely written as
\begin{align}
P = E \cdot L \cdot S 
\end{align}
where $S \in \S$, $E \in \T$ is the pure error with syndrome ${\bf{e}}_E  = {\bf{e}}_P$ and $L\in \gL$ is a logical Pauli operator.  

Quantum error correction involves repeated, non-destructive measurements of stabilizer group elements.
This gives information on which error has occurred and the Pauli correction that returns the computation back into the codespace.
The role of decoding is to maximize the probability that no logical error occurs when the correction assigned to a particular observed syndrome outcome $\bf{e}$ is applied.
Therefore, we have freedom to apply the correction $E\hat{L}$ where $E$ is the pure error associated to syndrome $\bf{e}$ and $\hat{L}$ is a logical operator given by
\begin{align}
\hat{L} = \arg \max_{L\in\gL} \textrm{prob}( L | {\bf e} ) =  \arg \max_{L\in\gL} \sum_{S\in \S} \frac{\textrm{prob}( E \cdot L\cdot S )}{\textrm{prob}(\bf{e})}.
\end{align}

There are different ways to implement syndrome extraction gadgets (i.e. non-destructive stabilizer measurements).
Flag-based schemes use a small ancilla register to measure a stabilizer with a fault-tolerant circuit that prevents adverse error propagation.
Nevertheless, when a non-trivial error syndrome is observed, one may wrongly infer that a data error occurred and apply an erroneous logical correction, rather than correctly attributing the syndrome to a measurement error.
Using multiple rounds of syndrome extraction allows for decoding algorithms to attribute certain syndromes to measurement errors, allowing these errors to be suppressed similarly to errors on data qubits.
As a result, decoding of a flag-based scheme is based not on a particular observed syndrome outcome $\bf{e}$ but rather on the parities of consecutive measurement outcomes, which we refer to as \emph{detectors}, and denote as $\d$.
Other schemes rely on logical ancillas to simultaneously extract half of the syndrome bits using Steane-style gadgets for CSS codes \cite{steane1996error} or Knill's scheme \cite{knill1996concatenated} that obtains all syndrome bits using a logical Bell pair.
In these cases, errors affecting operations within the QEC gadget have effects equivalent to independent errors on data qubits.
Therefore, individual syndrome outcomes $\bf{e}$ can be used in decoding; effectively setting $\bf{D} = \bf{e}$.

\subsection{Noisy syndrome extraction gadgets as instruments}

The fact that the process of measuring stabilizers is itself subject to errors means that, instead of the target projective measurement $\Pi_{\s}$, a (sub-normalized) pointer map $\I_{\s}$ is implemented whenever the syndrome $\s$ is observed.
A noisy syndrome extraction gadget is then described by an instrument $\I = \{\I_{\s}\} $ that transforms an input state $\rho$ into a classical outcome $\s$ and an updated state $\frac{\I_{\s} (\rho)}{\tr(\I_{\s}(\rho)}$.
For now, we consider circuits that are subject only to stochastic Pauli errors (which may also include adversarial noise models).
If the incoming state $ \rho =\rho_L$ is in the codespace and we observe a syndrome $\s$ with probability $p(\s): =p(\s|\rho_L)$ then this is a consequence of \emph{internal} errors occurring during the syndrome measurement process whose effect is to apply a Pauli error channel $\frac{1}{p(\s)}\P^{\s} (\rho_L)$.
Similarly, if the incoming state already had an error $\rho = M \rho_L M$ with syndrome $\m$ and we observe syndrome $\s$, this means that the internal errors during the syndrome measurement correspond to  $\s \oplus \m$, the difference between the observed syndrome and the syndrome of the incoming error.
In this case, the effect of the noisy syndrome measurement is given by the Pauli channel $\frac{\P^{\s\oplus \m}}{p(\s\oplus\m)} (\rho)$.
Combined, the effect of noise in the QEC gadget can be described by a collection of (sub-normalized) Pauli channels  $\{\P^{\s}\}$ with different Pauli error probability distributions $\{p^{\s}(P)\}_P$ associated with each possible syndrome outcome with normalization $\sum_{P} p^{\s}(P) = p(\s)$.
A noisy QEC gadget can then be viewed as a process that modifies an incoming probability distribution of errors depending on the observed syndrome, 
\begin{tikzpicture}
	\node at (-0.1,1) {$ \left\lbrace  p_{in} (M)\right\rbrace _{M} $};
	\draw[-{Stealth[length=2mm, width=2mm]}] (0.9,1) -- (1.3,1);
	\draw[thick] (1.4,0.5) rectangle (2.6,1.5);
	\draw[] (1.9,1.5) -- (1.9,1.8);
	\draw[] (2.1,1.5) -- (2.1,1.8);
	\node[align=left] at (2,1) {$QEC$};
	\node at (2,2) {$\s$};
	\draw[-{Stealth[length=2mm, width=2mm]}] (2.7,1) -- (3.1,1);
	\node at (5.2, 1) {$ \left\lbrace \sum\limits_{P} p_{in} (P) p^{\m \oplus\s} (P M)\right\rbrace _{M}  $};
\end{tikzpicture}.

Formally, since an arbitrary incoming state $\rho$ will have a mixture of different errors the noisy syndrome measurement with outcome $\s$ is most generally described by
\begin{equation}
	\I_{\s}(\rho) = \sum_{\m} \P^{\m\oplus \s} (\Pi_{\m} \rho \Pi_{\m})
\end{equation}
with the outgoing state conditioned on the observed syndrome given by $\frac{\I_{\s} (\rho )}{\Tr(\I_{\s} (\rho))} = \frac{\sum_{\m} \P^{\m\oplus \s} (\Pi_{\m} \rho \Pi_{\m})}{\sum_{\m} p(\m\oplus \s) \Tr(\rho \Pi_{\m})}$. 
If we were to then apply a correction $\R(\s)$ based on trusting the observed syndrome, then the QEC gadget and decoder would be described jointly as $\R \I = \{ \R(\s) \circ \I_{\s} \}_{\s}$. 

\subsection{Detector regions for repeated noisy QEC gadgets}
As discussed above, we will consider repeated rounds of two (or more) error correction gadgets with a fixed detector outcome $\d = \s_1\oplus \s_2$. 
If the incoming state $\rho_L$ is in the logical codespace and we observe syndromes $\s_1$ and $\s_2$ with probability $p(\s_1,\s_2)$, then the errors occurring internally within the detector region lead to the output state
\begin{equation}\label{eqn:syd-error-channel}
	\begin{split}
		\E^{(\s_1, \d\oplus \s_1)} (\rho_L) :=& \frac{\I_{\s_2} \I_{\s_1}(\rho_L)}{p(\s_1,\s_2)} \\
		=& \frac{ \left(\sum_{\m_2} \P^{\m_2\oplus \s_2} \P^{\s_1}_{\m_2} \right)(\rho_L)}{p(\s_1,\s_2)},
	\end{split}
\end{equation}
where $\P^{\s_1}_{\m_2}$ denotes the terms of the Pauli channel $\P^{\s_1}$ that contain only the errors with syndrome $\m_2$. 
Now, if the incoming state $\rho = M\rho_L M$  had an error with syndrome $\m$ but we were to observe $\s_1$ and $\s_2$, then the output state will be affected by a different channel arising from the internal errors within the detector region compatible with the classical outcomes, leading to the output state $\E^{(\s_1\oplus \m, \d \oplus\s_1 \oplus \m)}(\rho)$.

The \emph{detector error channel} describes only the effect of internal errors that occurred within the detector region conditioned on observing a classical detector outcome $\d$, and does not depend on the errors of the incoming state. 
It can be calculated by summing over events that lead to the same detector outcome $\d$ 
\begin{equation}
	\E^{\d} =  \frac{1}{p(\d)} \sum_{\s_1} p(\s_1, \d\oplus\s_1)\E^{(\s_1,\d\oplus \s_1)}.
\end{equation}
In terms of the Pauli channels $\{\P^{\s}\}$ that describe a single noisy QEC circuit, the detector error channel is equivalently given by
$ \E^{\d} = \frac{1}{p(\d)} \sum_{\m_1,\m_2} \P^{\m_2}\P^{\m_1}_{\m_1 \oplus \m_2 \oplus \d}$.
In Section~\ref{sec:protocol} we introduce a SPAM-robust protocol to determine, up to stabilizers, the probability of errors in $\E^{\d}$ that are introduced within the detector region, conditioned on observing a given detector outcome $\d$. 

\subsubsection{Example: Phenomenological model}
To illustrate the description of noisy QEC gadgets as instruments and detector error channels we consider a small $[[2,1,1]]$ $X$-error-detecting code  with stabilizer $-ZZ$ that gives two possible syndrome outcomes $\s \in \{0,1\}$ corresponding to two-qubit Pauli errors that respectively commute or anti-commute with $-ZZ$. 
The pure error is given by $IX$ and logical errors by$\bar{X} = XX$ and $\bar{Z} = ZI$.  
We consider a phenomenological noise model where independent $X$ errors occur with probability $p$ on the data qubits and syndrome bits are incorrect with probability $q$.
The channel on the data qubits takes the form
\begin{equation}
	\begin{split}
		\Lambda(\rho) &= \underbrace{(1-p)^2\rho + p^2 XX\rho XX}_{\Lambda_{I}} \\ & \hspace{1cm} + \underbrace{(1-p) p \ IX (\rho  + XX\rho XX) IX}_{\Lambda_{IX}}. 
	\end{split}
\end{equation}
Then the noisy measurement process when a syndrome $\s$ is observed will be given by
\begin{equation}
	\begin{split}
		\I_{\s} (\rho)  &= [\underbrace{(1-q)\Lambda_{I} + q \Lambda_{IX} }_{\P^0}](\Pi_{\s} \rho \Pi_{\s})  \\ & \hspace{1cm} + [\underbrace{q \Lambda_{I} + (1-q) \Lambda_{IX}}_{\P^1}] (\Pi_{\s\oplus 1}\rho \Pi_{\s\oplus 1}).
	\end{split}
\end{equation}
Note that $\P^0$ and $\P^1$ are sub-normalized maps, but $\P^0 + \P^1 = \Lambda$ is a proper trace-preserving channel which can be estimated by discarding the classical syndrome outcomes.

To construct the detector error channels, from eqn.~(\ref{eqn:syd-error-channel}) we have
\begin{align}
	p(0,0)\E^{(0,0)}  &=  (1-q)^2 \Lambda_I^2  + q \Lambda_{IX} \Lambda_I+  (1-q)q \Lambda_{IX}^2 \\
	p(1,1)\E^{(1,1)}  &=  q^2 \Lambda_I^2  + (1-q) \Lambda_{IX} \Lambda_I+  (1-q)q \Lambda_{IX}^2
\end{align}
and therefore the detector error channel conditioned on $\d =0$  -- which arises from the two events above where either $(\s_1, \s_2)$ is  $(0,0)$ or $(1,1)$  -- will be given by
\begin{align}
	\E^{\d = 0} = \frac{[(1-2q + 2q^2)  \Lambda_I^2 + 2(1-q) q \Lambda_{IX}^2]  +  \Lambda_{IX} \Lambda_{I}}{p(\d = 0)} .
\end{align}
In the next section, we show -- in more generality -- how detector error channels can be directly estimated with a protocol that incorporates classical syndrome outcomes. 

\section{SPAM-robust learning of syndrome-dependent logical Pauli channels}\label{sec:protocol}

Randomized benchmarking (RB) protocols typically amplify errors by sequential applications of target operations, using different sequence lengths so as to separate the SPAM errors and produce robust estimates.
Here, we describe such a robust protocol to determine detector error channels associated with regions of two (or more) syndrome extraction gadgets and conditioned on specific detector outcomes.
The main assumption in the derivation is that these channels are given by stochastic Pauli errors.
To ensure this condition is met in practical implementations, we modify the standard stabilizer measurement circuit as detailed in Section~\ref{sec:noisediagnostics}.

Twirling over the Pauli group (one of the modifications used) transforms a general quantum error channel $\D^{\d}$ associated with detector $\d$ into a Pauli channel 
$\E^{\d} (\rho) : =\mathbb{E}_{P\in \gP_n} P \, \D^{\d} (P \rho P)  P$, that takes the form
\begin{align}
\E^{\d} (\rho) = \sum_{P\in \gP_n}  p(P|\d) P\rho P.
\end{align}
Equivalently, a Pauli channel can be fully described by its eigenvalues $\lambda_{\d}(P)$ that satisfy $\E^{\d}(P_a) = \lambda_{\d}(P_a) P_a$.
The set of all eigenvalues $\{\lambda_{\d}(P)\}_{P\in \gP_n}$ is equivalently related via the $\mathbb{Z}_2$ Fourier (or Walsh-Hadamard) transform to the set of error probabilities of Pauli operators $\{p(P |\d)\}_{P\in \gP_n}$. 

Pauli errors that are equivalent up to multiplication by an element of the stabilizer group are identical at the logical level.
Therefore, we are interested in the coarse-grained probability distribution over stabilizer cosets, from which we can determine both the logical error channel and correction for a given set of detector outcomes.
Specifically, the probability of a coset $P\S$ will be given by
\begin{align}
p(P\S | \d)  := \frac{1}{|\S|} \sum_{S\in \S} p(PS |\d)
\end{align}

Equivalently, these probability distributions relate to eigenvalues over logical errors via the Fourier transform
\begin{align}
p(P\S | {\bf{D}}) = \frac{1}{2^{n+k}} \sum_{Q \in \gL\S} \lambda_{\d}(Q) (-1)^{\omega(P, Q)},
\end{align}
where we note that we only require eigenvalues corresponding to those Pauli operators $Q$ that commute with all elements in the stabilizer group (i.e. the normalizer of $\S$), as a consequence of Lemma~\ref{lem:eigenvalueslogical}.
Therefore, learning the set of eigenvalues $\lambda_{\d} (Q)$ for all $Q \in \mathbb{L} \mathbb{S}$ suffices to determine probabilities of error cosets for the given detector $\d$.
Furthermore, the eigenvalues relate to expected values of measuring operators $Q$ as 
\begin{align}
	\mathbb{E} (Q| \d,\rho ) = \Tr(Q \E^{\d} (\rho))  = \lambda_{\d}(Q) \Tr(Q\rho),
\end{align}
for some input state $\rho$.
When $\rho$ is a +1 eigenstate of $Q$ then the expected value of $Q$ given detector outcome $\d$ is simply the eigenvalue $\lambda_{\d}(Q)$. 

\subsection{Main characterization algorithm}
The LSD-DRT algorithm is an RB-like protocol that incorporates syndrome information in repeated rounds of QEC gadgets at various lengths to robustly determine the detector error channels $\E^{\d}$, up to stabilizers.
It involves $2r$ repeated applications of syndrome extraction with outcomes $\s^{(i)} = (\s_1,\ldots,\s_{2r})$, from which we define \emph{independent detectors} $\d_1,\ldots, \d_r$ between disjoint pairs of syndromes $(\s_1, \s_2), (\s_3,\s_4), \ldots (\s_{2r-1}, \s_{2r})$.
This involves partially marginalizing over the classical outcomes, but it has the effect that the post-measurement state after the $2r$ rounds where we observe the detectors $\d_1 = \s_1\oplus \s_2$, \ldots, $\d_r = \s_{2r-1} \oplus \s_{2r}$ is given by $\E^{\d_r} \circ \ldots \circ \E^{\d_1} (\rho)$ for any input state $\rho$. Therefore, the expectation value of measuring a Pauli observable $Q$ takes the form
\begin{align}\label{eqn:pure-pauli-fit}
	\mathbb{E}(Q|\d_1, \ldots, \d_r) = \lambda_{\d_1}(Q)\lambda_{\d_2}(Q) \ldots \lambda_{\d_r}(Q) \Tr [Q\rho].
\end{align}
We prove these results in Appendix~\ref{ap:mainprotocol}.
The assumptions going into the derivation of this fitting model are weaker than those commonly encountered in QEC.
Specifically, we assume that Pauli errors occur in the syndrome extraction gadget (both in the $n$ data qubits and the measured ancillas) and that their distribution is independent and identical between gadgets.
This includes circuit-level noise models, but also allows for correlated and non-i.i.d. errors within a gadget.

If the state preparation and measurement are affected by Pauli channels, the expected value of an operator $Q$ given that detectors $\d_1,\ldots,\d_r$ were observed and the input target state is a +1 eigenstate of $Q$ will then be given by
\begin{align}\label{eqn:analytical-fit-function}
	\mathbb{E}[Q|\d_1,\ldots,\d_r] = A_{Q}\prod_{\d} (\lambda_{\d} (Q) )^{n_{\d}},
\end{align}
where $A_Q$ incorporates the SPAM errors and the product ranges over all unique detector values $\d \in \{\d_1, \ldots, \d_r\}$ that were observed, where each value occurs $n_\d$ times in the given detector sequence with $\sum_{\d} n_{\d} = r$.
From the above, we also explicitly see that the expected value of $Q$ depends only on the set of counts $\vec{n}_{\d} \equiv \{ n_{\d} \}$ and not on the specific sequence of observed detectors $\d_1,\ldots, \d_r$.
For this reason, we can streamline the notation by using $Q|_{\vec{n_{\d}}}$ to denote the Bernoulli random variable given by the outcome of measuring the Pauli operator $Q$ conditioned on observing the detector counts $\vec{n}_{\d}$, which has expected value $\mathbb{E}(Q|_{\vec{n}_{\d}}) = \mathbb{E}(Q|\d_1,\ldots,\d_r)$.
To account for other sources of non-Pauli errors such as leakage in state preparation and measurement we will allow for a small offset parameter in the fitting model for each observable, which we justify in Appendix~\ref{sec:non-pauli-effects}.

The LSD-DRT estimation protocol (using frequentist statistics) has the following steps:
\begin{enumerate}
	\item \emph{Prepare a $+1$ eigenstate of a logical operator $L\in \gL$ and apply a sequence of $2r$ error correction gadgets, followed by a destructive measurement of $L$. }
	\item \emph{Obtain measurement outcomes of $Q = LS$ for all stabilizers $S\in \S$.}
	Since $L$ commutes with all stabilizers, they can be simultaneously measured and therefore the final measurement can give not only an outcome $l$ from measuring $L$ but also the outcomes of measuring each stabilizer, forming the syndrome ${\bf{o}} \in \{0,1\}^{ \times n-k}$.  For a CSS code, half of the stabilizers can be measured transversally, simultaneously with a final logical measurement \cite{gottesman2016surviving}.
	However, this step does not have to be fault-tolerant, due to the SPAM-robustness of the protocol, so it can be applied more generally without increasing the number of measurement settings.
	For example, the destructive measurement of $L\in \gL$ can be implemented by first applying the decoding circuit $U_{\textrm{enc}}\hc$ followed by an appropriate basis change.
	Then the first $k$ qubit outcomes $|z_1...,z_k\>$ will give $l = \frac{\<z_1,...z_k| L |z_1,...z_k\> + 1}{2} \in \{0,1\}$ and the last $n-k$ qubit outcomes will give ${\bf{o}} \in \{0,1\}^{\times n-k}$.
	Any $S\in \S$ can be written in terms of the generators as $S = S^{a_1}_{1} ... S^{a_{n-k}}_{n-k}$ for a binary string  ${\bf{a}} = (a_1,... ,a_{n-k})$, so it follows that the outcome measurement of $Q=LS$ will be given by $q := (l \oplus {\bf{o}\cdot \bf{a}} \mod 2)$.
	\item \emph{For different sequence lengths $r_1,..., r_M$, repeat the measurement $N_1,...,N_M$ times, gather detector statistics and final outcome data to construct sample mean estimators $\bar{q}|_{\vec{n}_{\d}}$ of the expected value $\mathbb{E}(Q|_{\vec{n}_{\d}})$ for each unique set of detector counts $\vec{n}_{\d}$ observed.}
	\item \emph{Estimate eigenvalues of the detector error channels $\{\lambda_{\d}(Q)\}$ and SPAM parameters by fitting the data to the decay model  $\mathbb{E}(Q|_{\vec{n}_{\d}}) = A_Q \prod\lambda_{\d}^{n_\d} (Q) +B_Q$.} 
	\item \emph{Repeat the above procedure (1-4) for all logical operators $L\in\gL$ to obtain $\{\lambda_{\d}(Q)\}_{Q\in \gL \S}$.}
	\item \emph{Determine the error probabilities $p(P\S|\d)$ by applying the Walsh-Hadamard transform to the estimated eigenvalues.  }
\end{enumerate}
For the Bayesian approach given in  Section~\ref{sec:bayes}, we replace steps (3)-(5) with sampling from the model given in Figure~\ref{fig:bayes-model}.
 
\subsection{Sample complexity}
The complexity of the LSD-DRT algorithm scales exponentially with the number of \textit{logical} qubits.
The number of measurement settings required is $\O(3^k)$, given the measured logical Pauli operators are grouped into commuting sets.
Notably, our protocol relies on expected values of Pauli operators in the larger set, $ \gL \cdot \S$ however, as stabilizers commute with each other and logical operators, all these stabilizer outcomes and final logical measurements can be measured simultaneously. 
In this manner we maximize the information learned from each shot.

\section{Data processing}\label{sec:data-process}

We now turn to how to estimate the eigenvalues and SPAM parameters with classical post-processing.
We give two approaches to this with different advantages, the first using frequentist and second using Bayesian statistics.
This allows us to reduce experimental overhead by building in some prior knowledge of the parameters.
We build on techniques originally formulated for randomized benchmarking \cite{hincks2018bayesian} which we adapt to our setting.

\subsection{Frequentist approach}

From the experiment data,  we obtain the sample mean estimators $\bar{q}|_{\vec{n}_{\d}}$of the expectation value $\mathbb{E}(Q|_{\vec{n}_{\d}})$ for each unique set of detector counts $\vec{n}_{\d}$ that we observe.
For sequences of $r$ detectors the probability of observing a given vector of detector counts follows the multinomial distribution described in Appendix~\ref{sec:sample-complexity}, and therefore the variances $\mathbb{V}[\bar{q}|_{\vec{n}_{\d}}]$ of the estimators will not be uniform for different $\vec{n}_{\d}$.
To deal with the heteroscedasticity in the data, we apply a weighted least-squares fit with weights proportional to the inverse of the variance, which can be obtained either from bootstrapping or analytically.
The decay function will be a modified version of that given in eqn.~(\ref{eqn:analytical-fit-function}), with an off-set to account for leakage errors within measurement.
We seek to solve for each eigenvalue $\lambda_{\d}(Q)$ for $ Q \in \mathbb{L}\mathbb{S}$ using the following minimization:
\begin{equation}\label{eqn:least-squares-fit}
	\min_{\substack{A,B, \{\lambda_{\d}(Q)\}}}   \sum_{ \vec{n}_{\d} }  \! \frac{ \big( 	\bar{q}_{\vec{n}_{\d}} -(A (\prod_{{\bf{D}}}  \left(\lambda_{\d}(Q) \right)^{n_{\d}} + B) \big)^2}{\mathbb{V}(\bar{q}|_{\vec{n_{\d}}})}, 
\end{equation}
where we loosely bound the SPAM parameters as $A + B \leq 1$ and eigenvalues as $-1 \leq \lambda_{\d}(Q) \leq 1$.
The uncertainties in the estimated eigenvalues can be propagated through the final transform straightforwardly.
In the situation where all noise is Pauli, estimation of the simpler fit in eqn.~(\ref{eqn:analytical-fit-function}) admits an analytical solution, and in Appendix~\ref{sec:sample-complexity} we discuss the covariance matrix describing the uncertainty in the estimated eigenvalues. In particular, this can be used to design the experiment practically by selecting sequence lengths that reduce the variances in the estimated probabilities of Pauli errors. 

\begin{figure*}[t]
	\includegraphics[width=15cm]{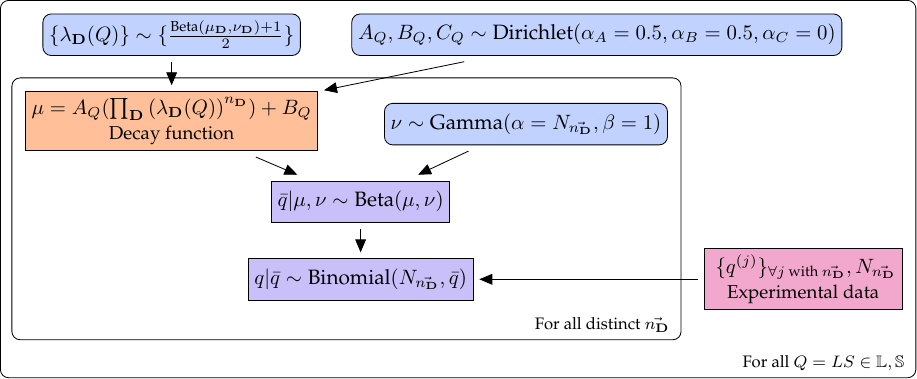} 
	\caption{ \textbf{Hierarchical Bayesian model.}
	Prior probability distributions ($\color{ibm0!50} \blacksquare$) are shown for the set of eigenvalues for each detector outcome channel, $\E^{\d}$, SPAM parameter sets and hyperparameter sets.
	Through the structure of the parameterized model, and the analytical formula of the decay function ($\color{ibm3!50} \blacksquare$), these priors can be updated (using Bayes' rule) given the likelihood distributions ($\color{ibm1!50} \blacksquare$) generated from experimental data ($\color{ibm2!50} \blacksquare$).
	This produces a posterior distribution across all parameters from which we infer marginals for the parameters of interest.
	In practice, we sample from the posterior using Markov chain Monte Carlo methods \cite{abril2023pymc, hincks2018bayesian}.
	}
	\label{fig:bayes-model}
\end{figure*}

\subsection{Bayesian approach}\label{sec:bayes}

A common alternative to frequentist statistics, the Bayesian implementation of data processing takes prior probability distributions for each detector channel as input, and updates them using the experimental data gathered to produce marginal posterior distributions.
Given the form of eqn.~(\ref{eqn:analytical-fit-function}) we can construct a hierarchical model to relate discrete shot outcomes to parameters of interest \cite{martin2024bayesian, blume2010optimal, huszar2012adaptive, hincks2018bayesian, abril2023pymc}.
In Figure~(\ref{fig:bayes-model}) we show the complete parameterized model.

Focusing on a particular logical operator, $Q$, for every unique set of detector clicks $\vec{n}_{\d}$, we collect a set of outcomes $\{ q^{(j)} \}$ where every $q^{(j)} \in \{0,1\}$ and $\abs{\{ q^{(j)} \}} = N_{\vec{n}_{\d} }$, the number of shots for this set of detector clicks.
Given the experimental data we can define a discrete likelihood distribution for every set of unique detector outcomes, $\vec{n}_{\d}$, such that
\begin{equation}
	q|\bar{q} \sim \text{Binomial}( N= N_{\vec{n}_{\d} }, \bar{q}  )
\end{equation}
where the sample size of the distribution is given by the observed shot count $N_{\vec{n}_{\d} }$.
The parameter $\bar{q}$ is a latent variable representing the success probability of any particular Bernoulli trial ($q^{(j)}$) for this binomial distribution.
As any success probability can be described by a bounded distribution on the unit interval, $\bar{q}$ is described by the canonical distribution 
\begin{equation}\label{eqn:beta}
	\bar{q}|\mu, \nu \sim \text{Beta}(\mu, \nu)
\end{equation}
which we parametrize in terms of the mean, $\mu$, and the sample size parameter $\nu > 0$.

We now turn to defining priors for the remaining parameters which are updated given the likelihood distributions, as shown in Figure \ref{fig:bayes-model}.
For every unique $\vec{n}_{\d}$, the sample size parameter $\nu$ is a hyperparameter for which we should choose a non-informative prior with support on all valid values.
A gamma distribution is suitable since it has only on the positive real numbers, and we use as a prior
\begin{equation}
	\nu \sim \text{Gamma}(\alpha = N_{\vec{n}_{\d} }, \beta = 1)
\end{equation}
with shape parameter $\alpha$ as $N_{\vec{n_{\d}} }$ and rate parameter $\beta = 1$.
This ensures the distribution is centred on the true sample size (the observed shot count) for the experimental dataset, but not sharply peaked.

For each $\vec{n}_{\d}$, the distribution $\mu$ can be directly related to the set of eigenvalues of each detector channel $\{ \lambda_{\d}(Q) \}$ and SPAM parameters ($A_Q,B_Q$) by
\begin{equation} \label{eqn:bayes-fit}
	\mu = A_Q \left( \prod_{{\bf{D}}} \left(\lambda_{\d}(Q) \right)^{n_{\d}} \right) + B_Q
\end{equation}
through eqn.~(\ref{eqn:analytical-fit-function}), assuming the assumptions of the protocol hold. 
The SPAM parameters must always obey some simple bounds which ensure $0 \leq \mu \leq 1$.
Further, in the realistic regime we have $A,B>0$ and $A \approx B$. 
Therefore we use a mildly informative prior 
\begin{equation}
	A_Q, B_Q, C_Q \sim \text{Dirichlet}(\alpha_A = \alpha_B=0.5,\alpha_C = 0)
\end{equation}
where $C_Q$ is an extra hyperparameter to allow $A + B \leq 1$.

For the set of eigenvalues $\{ \lambda_{\d}(Q) \}$, to ensure they have support across the whole parameter space, we use priors given by a scaled beta distribution
\begin{equation}
	\{ \lambda_{\d}(Q) \} \sim \left \lbrace  \frac{\text{Beta}(\mu_{\d}, \nu_{\d}) + 1}{2} \right \rbrace.
\end{equation}
where $\mu_{\d}$ and  $\nu_{\d}$ are again mean and sample size parameters respectively.
Settings for $\{\mu_{\d}, \nu_{\d} \}$ for each operator $Q$ complete the model.

To produce a set of uninformative priors for the eigenvalues, we set $\mu_{\d}=1/2$ and $\nu_{\d}=2$ which is equivalent to the uniform distribution.
For an informative prior, we would set $\{ \mu_{\d}, \nu_{\d} \}$ using information from previous experimental runs or statevector simulations of small systems (see Figure~\ref{fig:LSD-2,1,1-bayes}), thereby reducing the experimental overhead required to achieve the same credible interval for the final estimates.
For the experiments in this work, we construct such priors using the emulated device H1-1E.
We find performing LSD-DRT using similar shot counts and sequence lengths for the emulator as for the real device produces posterior distributions which are amenable to use as experimental priors.

In order to obtain our final estimates, we perform sampling using Markov chain Monte Carlo methods using the python package \texttt{pymc} \cite{abril2023pymc}. The results for both post-processing techniques are given in Section~\ref{sec:results}. Possible extensions to the model are discussed in Appendix~\ref{sec:channel-probs-issues}.

\section{Noise mitigation and diagnostics for fault tolerance} 
\label{sec:noisediagnostics}
\begin{figure*}[t!]
	\includegraphics[width=\textwidth]{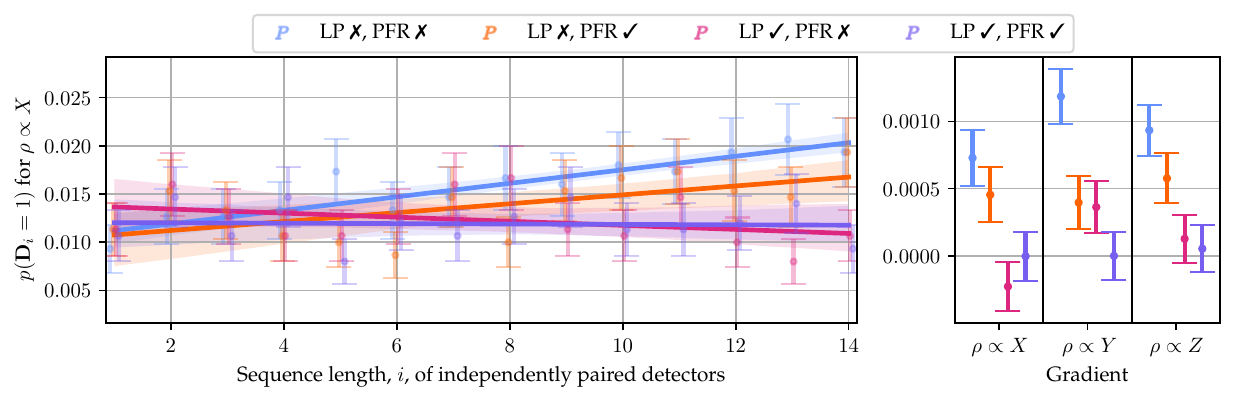}
	\caption{
	\textbf{Stabilized detector click rate for the $\nkd{2}{1}{1}$ code using noise-suppression strategies.}
		Using H1-1, we compare the detector click rate over time with and without both leakage protection (LP) and Pauli frame randomization (PFR).
		A linear least-squares fit is performed for each input state ($\rho \propto X, Y, Z$) and we plot one arbitrary input state ($X$) across settings.
		The estimated gradient for each dataset is given, where a statistically significant increase in click rate over time is observed without LP.
		Additionally, PFR can be seen to ensure a consistent the click rate between different input states.
		Experimental details given in Table \ref{tab:exp-settings}.
	}\label{fig:leak-increase-click-rate}
\end{figure*}

The estimation procedure introduced in Section~\ref{sec:protocol}, requires that the error channels can be efficiently twirled into stochastic Pauli errors.
Errors are commonly assumed to be of stochastic Pauli form within QEC -- both when calculating theoretical thresholds and during decoding \cite{beale2018coherence}.
Not making this assumption can also lead to thresholds an order of magnitude higher \cite{kueng2016comparing}.
In this section, we discuss active noise tailoring and suppression techniques which can be used to achieve statistics compatible with a stochastic Pauli error model.

The following observations allow us to identify non-Pauli channel behaviour by examining the statistics of detector counts within sequences.
Within a sequence of independent detectors (see Figure~\ref{fig:independent-detectors}), the probability of a particular detector outcome -- such as $p(\d = \bf{0})$ -- should not depend on: (1) the initial input state $\rho$ of the sequence, (2) the position of detector within the sequence, or (3) the outcome of any other detector.
Concretely,
\begin{align}
	(1) \, &p(\d_i  | \rho )  = p (\d_i ) \\
	(2) \, &p(\d_i  )  = p(\d_j) \\
 	(3) \, &p(\d_i , \d_j)  = p(\d_i) p(\d_j)
\end{align}
These properties essentially follow from the fact that logical states are $+1$ eigenstates of all elements the stabilizer group and the proof of Lemma~\ref{eq:product-independent-detectors}.
There are a series of noise mechanisms that can lead to violations of the above relationships, including non-Pauli errors (see Appendix~\ref{sec:non-pauli-effects}) and time-dependent errors.

\subsection{Leakage protection via SWAPs}

Quantum computer architectures with qubits defined using two levels of a multi-level system \cite{moses2023race, mcewen2021removing, muniz2025repeated} are susceptible to leakage -- where the state of the physical system in question leaves the computational subspace.  
In Quantinuum devices, gate operations act as the identity on ions outside the computational subspace, and measurements will give fixed outcomes.
Therefore, within a circuit, leakage errors can propagate and accumulate over time, strongly limiting the effectiveness of QEC.

Leakage can be detected with (non-SPAM robust) leakage detection gadgets \cite{stricker2020experimental, moses2023race}.
However, this requires an increased rate of post-selection with circuit size.
Several methods that aim to actively suppress leakage in QEC circuits include direct removal \cite{battistel2021hardware, miao2023overcoming}, introducing auxiliary qubits to apply leakage reduction units \cite{fowler2013coping, brown2019handling} and feedback \cite{bultink2020protecting}, or swapping the role of data and ancilla measurement qubits 
\cite{brown2019leakage}. 
Here, we use the latter approach and periodically measure out all physical qubits using leakage-protected syndrome extraction circuits that have additional SWAPs, as illustrated in Figure~\ref{fig:lp-fault-tolerant-211-gadget}.
In general, using this technique, for an $n$-qubit scheme with $m$ measurements per gadget a leaked qubit is returned to the computational subspace within (at most) $\frac{n+m}{m}$ subsequent gadgets.

We can assess whether leakage protection (LP) via swaps is effective at removing time-dependent errors by measuring correlations between independent detectors, including the use of two-point correlators  \cite{spitz2018adaptive}.
Further, using state-vector simulations, we show that typical leakage rates can cause strong correlations (see Figure~\ref{fig:marginal-prob-detectors}) and a rate of non-trivial detector outcomes that increases with the number of QEC gadgets.

\subsection{Pauli twirling a syndrome extraction gadget} 
Given a leakage-protected circuit implementation of the syndrome measurements, we further use Pauli frame randomization (PFR) \cite{ware2021experimental, hashim2020randomized, wallman2016noise} to twirl the residual errors into Pauli channels, which we can then estimate using LSD-DRT.
For every gadget in a sequence, we apply a uniformly random physical Pauli operator, $P$, on the data qubits before and after.
During circuit compilation, we combine the two adjoining Pauli operators between gadgets into a depth-1 operation.
A classical correction must be applied to the syndrome data $\hat{\bf{e}} = {\bf{e}} \oplus {\bf{e}}_P$ for every observed outcome $\bf{e}$ and where $ {\bf{e}}_P$ is the syndrome of the sampled Pauli operator, introduced before the gadget.

The effectiveness of PFR on gadgets can primarily be assessed tomographically by preparing and measuring in a different Pauli basis, which should be statistically indistinguishable from zero. We can also compare average detector rates across different input states, where significant variations can indicate a coherent component to the error channels (see Appendix~\ref{sec:non-pauli-effects}).

\subsection{Leakage twirling}
Depending on where leakage occurs in the circuit, the use of the two techniques above can cast local leakage errors into a Pauli channel form.
Using LP, any physical qubit will eventually be read out.
When we include PFR, using an example of measuring a $Z$ operator on a single qubit, we essentially have the following circuit fragment
\begin{equation}
    \begin{quantikz}[row sep=0.2cm, column sep = 0.2cm]
      \lstick{A: }  & \gate{P}       & \targ{}    & \ctrl{1}   &  \meter{} \rstick{$\scriptstyle \pm Z$}   \\
       \lstick{B: } & \setwiretype{n}  {\ket{0}} &  \ctrl{-1} \qw  &  \targ{} \qw & \gate{P} \qw & \qw \\
    \end{quantikz}
\end{equation}
up to some single-qubit rotations depending on the QEC code.
If a leakage event has occurred on qubit $A$ before the circuit or during the first CNOT then neither CNOT is applied.
This may induce some error on qubit $B$ but this will not depend on the previous state of qubit $A$ \cite{hayes2020eliminating}.
Qubit $B$ then has a random Pauli operator applied to it.
As $\mathbb{E}_{P \in \gP_1} P \rho P = \mathds{1}/2$ for any single qubit state $\rho$, this depolarizes qubit $B$.

Having described our noise mitigation techniques and how we can witness non-Pauli channel behaviour analytically, we turn to our experimental implementation. 

\section{Implementation and  Numerical results}\label{sec:results}

\begin{figure*}[t]
	\begin{subfigure}{0.99\textwidth}
		\centering
		\includegraphics[width=\textwidth]{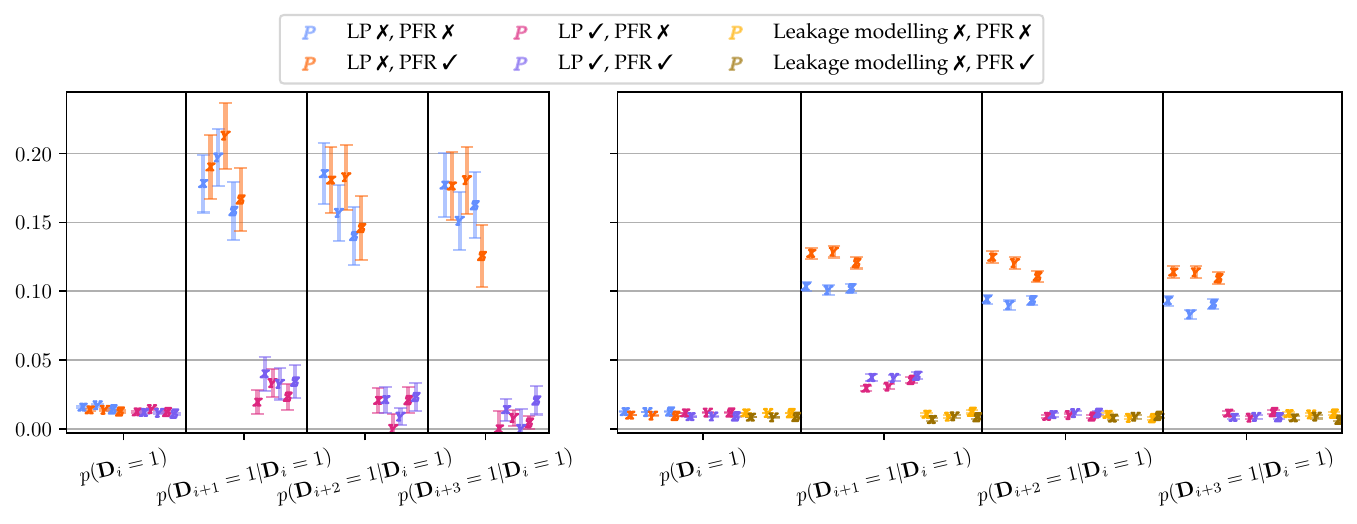}
		\caption{ \textbf{ H1-1} \hspace{7.2cm} (b) \textbf{ H1-1E} \hspace{1cm} }
	\end{subfigure}
	\caption{ \textbf{Measurement of correlations in detector outcomes for $\nkd{2}{1}{1}$ code.}
		(a) For H1-1, conditional probabilities of non-trivial detector outcomes are estimated for a sequence of 15 independent detectors: with and without leakage protection (LP) and Pauli frame randomization (PFR).
		The marker is used to denote the input logical eigenstate ($\rho \propto X,Y,Z$).
		When LP is used, correlations between detectors are suppressed, measured through the conditional probability of a detector outcome given previous detector outcomes.
		(b) The same procedure is performed for H1-1E, and we additionally compare disabling leakage within the error model, which removes correlations.
		Experimental details given in Table \ref{tab:exp-settings}.
	}\label{fig:marginal-prob-detectors}
\end{figure*}

In this section we give numerical and experimental results for LSD-DRT and noise diagnostics, implemented using Quantinuum H1-1, a 20-qubit trapped-ion quantum computer with all-to-all connectivity.
We also make use of Quantinuum H1-1 emulator (H1-1E) -- a state vector simulated version of H1-1 with a full error model \cite{ryan2021realization, ryan2018quantum}.

\subsection{$\nkd{2}{1}{1}$ on H1-1}\label{sec:results-2,1,1}
We consider the smallest possible stabilizer code with non-trivial structure -- a $\nkd{2}{1}{1}$ CSS code.
We choose the code space to be stabilized by $-ZZ$ and with logical operators $\bar{X} = XX$ and $\bar{ZI}$.
This gives a logical subspace of $\text{span}(\ket{01}, \ket{10})$, otherwise called a Decoherence Free Subspace (DFS) \cite{kielpinski2002architecture}, for which any logical state is invariant under the $Z$ rotation $e^{i \theta Z} \otimes e^{i \theta Z}$.
Therefore logical states should be protected against coherent memory errors which take this form.
The use of this $\nkd{2}{1}{1}$ code therefore illustrates the broad applicability of these techniques even in regimes where one might expect some resistance to non-Pauli behaviour.

We first perform the noise diagnostic tests given in the previous section before proceeding with our main characterization protocol. 
Experimental details are given in Table \ref{tab:exp-settings}.

\subsubsection{Detector rate stabilization}\label{sec:noise-diag-experiments-stab}

In Figure~\ref{fig:leak-increase-click-rate}  we plot the click rate $p(\d_j = 1)$ of each detector in the sequence.
Without leakage protection, the click rate can be seen to increase with the length of the sequence, regardless of randomization over the Pauli frame.  
Whereas when using LP the click rate remains stable, suggesting error accumulation is localized.
Secondly, we see that -- both with and without leakage protection -- a randomization over Pauli frame brings the click rate for different eigenstates closer together. 
Recall that the click rate is independent of input states for pure Pauli noise.

\subsubsection{Detector pair correlations} \label{sec:noise-diag-experiments-corr}

For this flag scheme, when using leakage protection, after every 3 gadgets we have read out every physical qubit (see Figure~\ref{fig:lp-fault-tolerant-211-gadget}). 
Therefore when considering sequences of independent detectors, $\d_1,\ldots, \d_r$, any leakage may induce correlations between $\d_i$ and $\d_{i+1}$ but not between $\d_i$ and $\d_{i+2}$ (see Figure~\ref{fig:leakage-in-2,1,1}).

We test this by estimating the marginal probability $p(\d_{i+n}=1 | \d_{i}=1)$ of future detector $\d_{i+n}$ clicking \textit{given} a specific previous detector $\d_i$ has clicked. 
For pure Pauli noise, we would expect no corrections and therefore $p(\d_{i+n}=1 | \d_{i}=1) = p(\d_{i}=1)$.
As we consider independent pairs of detectors in sequence, measurement and hook errors do not contribute, leaving coherent effects and leakage errors. 

In Figure~\ref{fig:marginal-prob-detectors}, we plot the estimated marginal probabilities for $n=(1,2,3)$. 
Without leakage protection we see statistically significant higher marginal rates, both with and without Pauli frame randomization. 
When leakage protection is applied we see a large reduction in all marginal rates, with $p(\d_{i+2}=1 | \d_{i}=1)$ not being statistically different from the base rate $p(\d_{i}=1)$, consistent with the fact all physical qubits involved in $\d_i$ have been read out before $\d_{i+2}$.

These results suggest leakage protection is highly effective at converting correlating leakage errors into stochastic events.

\subsubsection{Benchmarking of flag-based syndrome measurements }

Through the LSD-DRT protocol, we learn the set of detector error channels, $\E^{\d}$, which are physical Pauli channels up to stabilizers of the code.
For any code, we can write these channels in a compact way as the sum of logical channels with each pure error $E \in \T$, such that
\begin{equation}
	\E^{\d}(\rho) = \sum_{L\in \gL, E \in \T}  p(L E|\d) EL \rho LE =  \sum_{E \in \T} E \ \E^\d_{E} (\rho) \ E
\end{equation}
for the (sub-normalized) logical Pauli channels $\E^\d_{E} (\rho) = \sum_{L\in \gL}  p(L E|\d) L\rho L$ where $\gL$ are a minimal set of logical operators for the code.
These channels are also the set $\{ \E^\d_{E} \}$ one would obtain following the detector with the ideal decoder.

\begin{figure*}[t]
	\begin{subfigure}{0.49\textwidth}
		\centering
		\includegraphics[height=7cm]{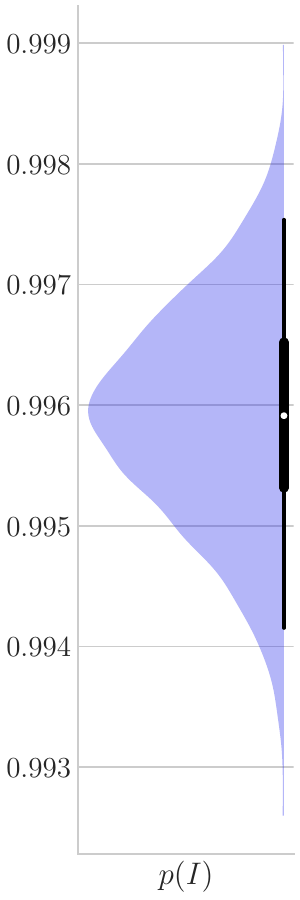}
		\includegraphics[height=7cm]{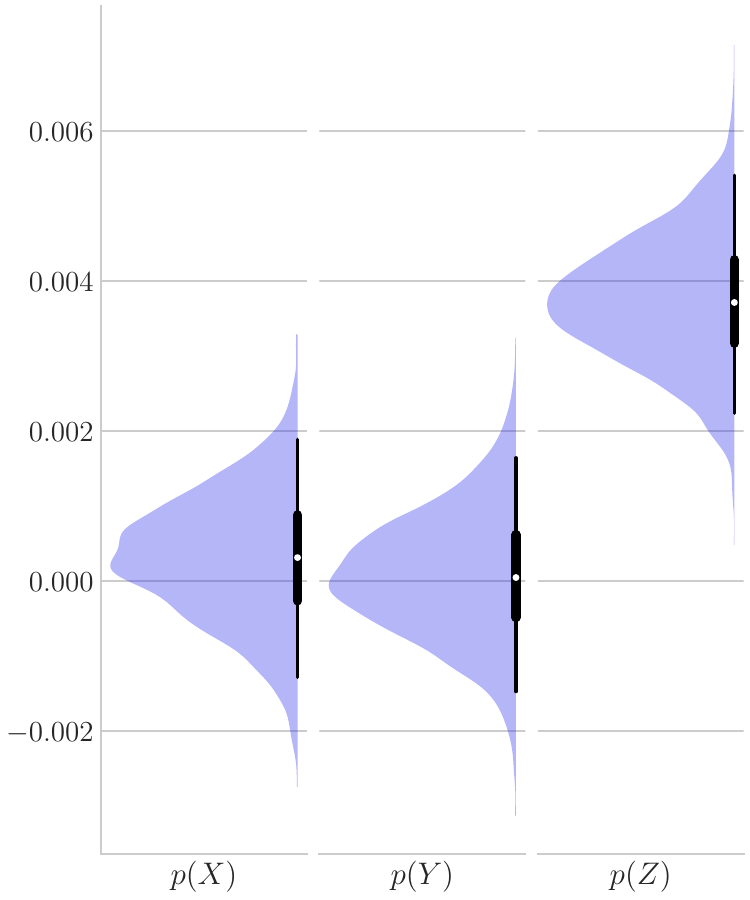}
		\caption{ Logical channel for detector outcome $\d=0$ when post-selected on no detectable error $\E_I^0/\tr(\E_I^0)$ }
		\label{fig:E_I^0}
	\end{subfigure}
	\begin{subfigure}{0.49\textwidth}
		\includegraphics[height=7cm]{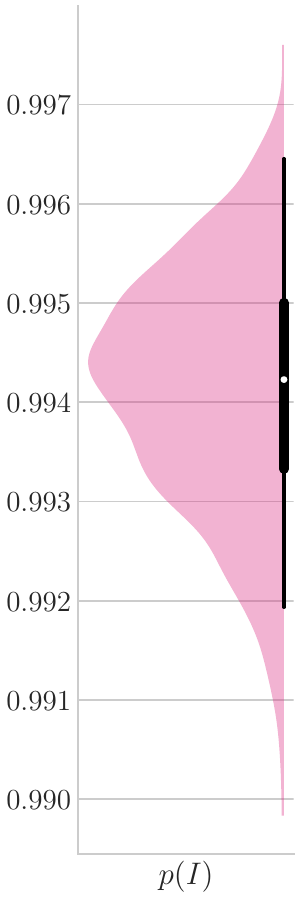}
		\includegraphics[height=7cm]{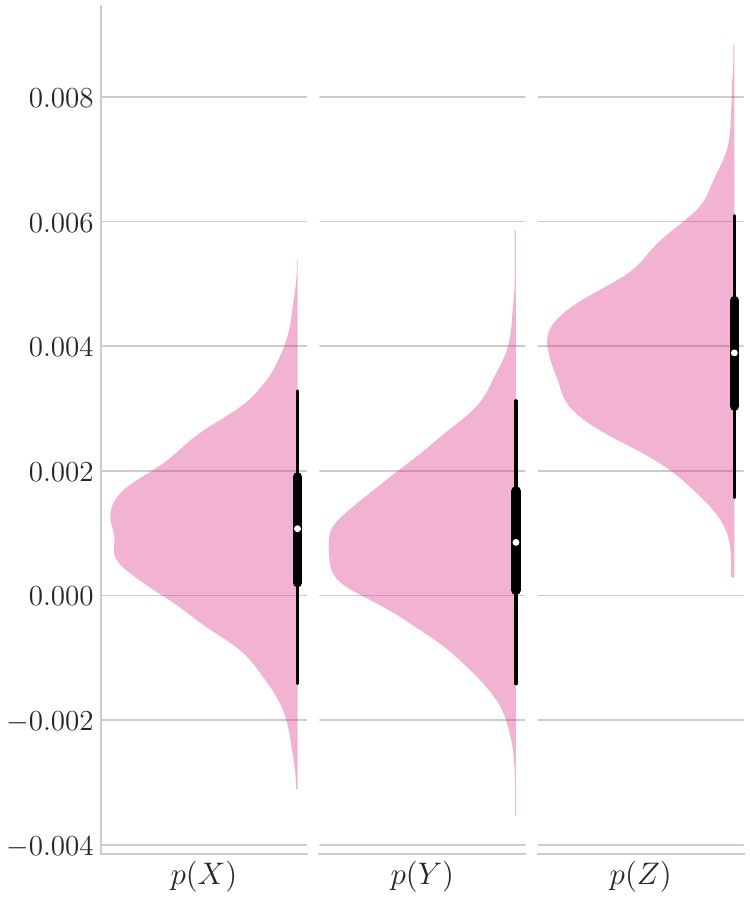}
		\caption{ Logical channel for detector outcome $\d=0$ when followed by an ideal decoder $(\E_I^0 + \E_X^0)/\tr(\E_I^0 + \E_X^0)$ }
		\label{fig:E_I^0+E_X^0}
	\end{subfigure}
	
	\caption{\textbf{Logical detector channel estimation for the $\nkd{2}{1}{1}$ code on H1-1 using Bayesian statistics.} 
	Violin plots showing syndrome-resolved, marginal posterior probability distributions of errors from a region of two syndrome gadgets with detector outcome $\d = 0$ that occurs with probability $p_{\text{H1-1}}(\d=0) = 0.9896 \pm 0.0002$ per detector.
	Data was obtained with the LSD-DRT protocol, using uniform priors for each eigenvalue.
	Experiment details are available in Table \ref{tab:exp-settings} and full syndrome-resolved posterior distributions for both detector outcomes in Figure~\ref{fig:LSD-2,1,1-bayes-uniform-prior}.  }
	\label{fig:bound-channels-2,1,1-bayes}
\end{figure*}

In Figure~\ref{fig:LSD-2,1,1-bayes}, we show the full results of LSD-DRT, including the average channel per gadget  -- where syndrome data is discarded.
We perform Bayesian post-processing on experimental data using both uninformative priors and informative priors (see Section~\ref{sec:bayes}), to produce marginal posterior distributions for each probability.

For both $\E^0$ and $\E^1$ we observe that $\E^\d_{I}$ are approximately logical dephasing channels where $p(I) \gg p(Z)>0$ while $p(X),p(Y) \approx 0$. 
The channels $\E^\d_{X}$ are closer to completely depolarizing, with $p(Z) \sim 3 p(X \lor Y)$. 

In Figure~\ref{fig:bound-channels-2,1,1-bayes}, we plot $\E^{0}_I$ and $\E^0_I + \E^0_X$ with re-normalization.
For post-selection, performance of the characterized detector region will be upper bounded by the case when any remaining physical error, $\E^{0}_X$,  will be caught within the next round of syndrome extraction.
This gives the channel $\E^{0}_I/\tr(\E^{0}_I)$ in Figure~\ref{fig:E_I^0}.

We can also consider performing corrections.
Within a circuit, a simple decoding strategy \cite{huang2024comparing} would be to apply no correction given $s_1 = s_2 = 0$ and correct using the pure error $IX$ given $s_1 = s_2 = 1$. 
As both of these outcomes give $\d = s_1 \oplus s_2 = 0$ performance will be upper-bounded by the $\E^0$ channel, and specifically $(\E^0_I + \E^0_X)/\tr(\E^0_I + \E^0_X)$.
Equivalently, this channel places a lower bound on the performance of post-selection, where the residual physical pure error is not caught and the logical error rate becomes averaged over both logical co-sets.

Finally, we note that the characterization given using LSD-DRT is complete -- in the sense that it can be used to calculate error channels for \textit{arbitrary sequences} of overlapping detectors as discussed in Figure~\ref{fig:fid-bounds-2-1-1}.

\subsection{ $\nkd{4}{2}{2}$ on H1-1E}

To demonstrate the generality of our approach, using H1-1E, we perform LSD-DRT for the $\nkd{4}{2}{2}$ code using a flag-based syndrome extraction scheme and the noise tailoring techniques in Section~\ref{sec:noisediagnostics}.
In Figure~\ref{fig:LSD-4,2,2-frequentist} we plot the results using the frequentist post-processing methods.

We compare the results of LSD-DRT with numerics for a simplified circuit-based noise model.
We use a characterization of the CNOT gate on H1-1, in the form of a two-qubit Pauli channel using parameters obtained from cycle benchmarking \cite{erhard2019characterizing} for the native maximally entangling gate, $e^{-i \frac{\pi}{4} Z \otimes Z}$.
We propagate this model through the syndrome extraction circuits given in Figure~\ref{fig:422-gadgets} to calculate detector channels isolated from other errors sources and finite sampling effects.

When compared to our direct approach, inferring syndrome-resolved channels using the CNOT noise model results in notable discrepancies in the distribution of particular errors but similar total error rates.
This suggests other error sources beyond a circuit level noise model significantly contribute to logical performance, highlighting the need for direct LSD type methods.

\section{Conclusion}

Developing logical noise characterization tools and diagnostics capable of testing the fundamental limits of practical hardware implementation of QEC is crucial for de-risking the path towards fault-tolerant computations at scale.
In this work, we introduced and demonstrated on hardware a SPAM-robust protocol to directly estimate logical Pauli channels conditioned on the detector outcomes that arise from a region of consecutive syndrome extraction circuits. 
This method provides access to syndrome-resolved logical-level noise structure beyond average error rates.
Our implementation focused on small error-detection codes, as these may be used as a first level in concatenated schemes \cite{dasu2025order, sommers2025observation}. 

Our protocol also applies to error-correcting codes, including a recovery operation.
It is naturally suitable for characterizing flag-based syndrome extraction where corrections depend on detectors -- parities of outcomes in two or more repeated stabilizer measurements.
The main theoretical limitation is that in its current form, the method cannot estimate the conditional error channels of a single syndrome extraction gadget with a trusted outcome -- only pairs of gadgets with a fixed detector.
For example, this is the case for individual stabilizer measurements based on Knill or Steane schemes.
Incorporating recent progress on benchmarking mid-circuit measurements and instruments \cite{zhang2025generalized, mclaren2025benchmarking, hines2025pauli} could offer a way forward, but raises further questions if all syndrome-dependent logical errors can be learned without enforcing additional assumptions on the noise models. 

An immediate extension of this protocol is to apply it to extended QEC regions including syndrome extraction, logical Clifford gates and corrections in order to determine the logical Pauli error channels (and pure errors) associated with specific detector outcomes.
Because our protocol actively uses the classical outcomes from independent detectors, this breaks down the degeneracy in estimating fidelities of Pauli errors equivalent under conjugation by the Clifford gates.
As a result, we could estimate the individual probabilities that are otherwise un-learnable using cycle benchmarking \cite{erhard2019characterizing}.
Our proofs directly apply to single code-blocks of Hermitian codes for which Clifford gates leave the stabilizer group invariant, but we leave generalizations for future work.

For larger quantum error correcting codes, certain syndromes correspond to rare events with insufficient statistics to fully resolve their associated channels. In such cases, one can still use our methods by coarse-graining syndromes with a probability of occurrence below a certain level to characterize their resulting average channel. Obtaining estimates with relative precision is necessary for finer syndrome-resolved error characterization, but remains challenging. In follow-on work we will aim to improve our Bayesian model and explore the use of ratio estimators which have been used to obtain relative precision when channels are purely Pauli \cite{flammia2020efficient} and the decay model does not require an off-set.

An important direction of future work is to compare the conditional error channels directly estimated at the logical level to those predicted from physical-level characterizations.
SPAM-robust techniques such as averaged circuit eigenvalue sampling \cite{flammia2021averaged} and extensions of cycle benchmarking can estimate the effective physical noise of Clifford circuits, and have become important tools for calibrating and understanding device performance.
Recent work \cite{hockings2025scalable} showed that ACES can scalably learn (contextual) circuit-level noise model in layers of Clifford gates that make up syndrome extraction circuits.
While, for hardware with a fixed architecture, characterizing layers under the assumption of localized error models may be more well-motivated, dynamic architectures such as QCCDs require more careful consideration that would be interesting to explore further and verify using a logical level benchmarking that relies on fewer assumptions.  

Finally, beyond validating the performance of quantum error correction on real hardware and testing failure mechanisms affecting scalability, benchmarking logical components could also be an important tool to improve decoding and reduce logical error rates. 
A key open question is to what degree noise-aware decoding can tolerate mismatches between modelled and actual hardware noise; whether simple circuit-level models suffice, or if logical-level priors are needed for the lower levels of a concatenated code.

\section{Acknowledgements}
We thank Karl Mayer and Shival Dasu for reviewing this manuscript. 

\bibliography{refs.bib}

\newpage
\onecolumngrid
\appendix

\begin{table}[h]
\begin{tabular}{@{}llllllllll@{}}
\toprule
                                                & Date (2025)                          & Device                          & Code                                 & Input states                    & Sequence length(s)                     & Copies                          & Shots                                       & PFR                             & LP         \\ \midrule
\multicolumn{1}{l|}{\textbf{Noise Diagnostics}} & \multicolumn{1}{l|}{March 31st}      & \multicolumn{1}{l|}{H1-1}       & \multicolumn{1}{l|}{$\nkd{2}{1}{1}$} & \multicolumn{1}{l|}{3}          & \multicolumn{1}{l|}{30}              & \multicolumn{1}{l|}{6}          & \multicolumn{1}{l|}{250}                    & \multicolumn{1}{l|}{\cmark}     & \cmark     \\
\multicolumn{1}{l|}{}                           & \multicolumn{1}{l|}{\texttt{"}}      & \multicolumn{1}{l|}{\texttt{"}} & \multicolumn{1}{l|}{\texttt{"}}      & \multicolumn{1}{l|}{\texttt{"}} & \multicolumn{1}{l|}{\texttt{"}}      & \multicolumn{1}{l|}{\texttt{"}} & \multicolumn{1}{l|}{\texttt{"}}             & \multicolumn{1}{l|}{\xmark}     & \cmark     \\
\multicolumn{1}{l|}{}                           & \multicolumn{1}{l|}{April 17th}      & \multicolumn{1}{l|}{\texttt{"}} & \multicolumn{1}{l|}{\texttt{"}}      & \multicolumn{1}{l|}{\texttt{"}} & \multicolumn{1}{l|}{\texttt{"}}      & \multicolumn{1}{l|}{\texttt{"}} & \multicolumn{1}{l|}{\texttt{"}}             & \multicolumn{1}{l|}{\cmark}     & \xmark     \\
\multicolumn{1}{l|}{}                           & \multicolumn{1}{l|}{\texttt{"}}      & \multicolumn{1}{l|}{\texttt{"}} & \multicolumn{1}{l|}{\texttt{"}}      & \multicolumn{1}{l|}{\texttt{"}} & \multicolumn{1}{l|}{\texttt{"}}      & \multicolumn{1}{l|}{\texttt{"}} & \multicolumn{1}{l|}{\texttt{"}}             & \multicolumn{1}{l|}{\xmark}     & \xmark     \\
\multicolumn{1}{l|}{}                           & \multicolumn{1}{l|}{\texttt{"}}      & \multicolumn{1}{l|}{H1-1E}      & \multicolumn{1}{l|}{\texttt{"}}      & \multicolumn{1}{l|}{\texttt{"}} & \multicolumn{1}{l|}{\texttt{"}}      & \multicolumn{1}{l|}{\texttt{"}} & \multicolumn{1}{l|}{10000}             & \multicolumn{1}{l|}{\cmark}     & \cmark     \\
\multicolumn{1}{l|}{}                           & \multicolumn{1}{l|}{\texttt{"}}      & \multicolumn{1}{l|}{\texttt{"}} & \multicolumn{1}{l|}{\texttt{"}}      & \multicolumn{1}{l|}{\texttt{"}} & \multicolumn{1}{l|}{\texttt{"}}      & \multicolumn{1}{l|}{\texttt{"}} & \multicolumn{1}{l|}{\texttt{"}}             & \multicolumn{1}{l|}{\xmark}     & \cmark     \\
\multicolumn{1}{l|}{}                           & \multicolumn{1}{l|}{\texttt{"}}      & \multicolumn{1}{l|}{\texttt{"}} & \multicolumn{1}{l|}{\texttt{"}}      & \multicolumn{1}{l|}{\texttt{"}} & \multicolumn{1}{l|}{\texttt{"}}      & \multicolumn{1}{l|}{\texttt{"}} & \multicolumn{1}{l|}{\texttt{"}}             & \multicolumn{1}{l|}{\cmark}     & \xmark     \\
\multicolumn{1}{l|}{}                           & \multicolumn{1}{l|}{\texttt{"}}      & \multicolumn{1}{l|}{\texttt{"}} & \multicolumn{1}{l|}{\texttt{"}}      & \multicolumn{1}{l|}{\texttt{"}} & \multicolumn{1}{l|}{\texttt{"}}      & \multicolumn{1}{l|}{\texttt{"}} & \multicolumn{1}{l|}{\texttt{"}}             & \multicolumn{1}{l|}{\xmark}     & \xmark     \\
\multicolumn{1}{l|}{\textbf{LSD-DRT}}               & \multicolumn{1}{l|}{April 25th-27th} & \multicolumn{1}{l|}{H1-1}       & \multicolumn{1}{l|}{\texttt{"}}      & \multicolumn{1}{l|}{\texttt{"}} & \multicolumn{1}{l|}{[6,12,18,36,72]} & \multicolumn{1}{l|}{\texttt{"}} & \multicolumn{1}{l|}{[350,1500,350,350,350]} & \multicolumn{1}{l|}{\cmark}     & \cmark     \\
\multicolumn{1}{l|}{}                           & \multicolumn{1}{l|}{April 30th}      & \multicolumn{1}{l|}{H1-1E}      & \multicolumn{1}{l|}{\texttt{"}}      & \multicolumn{1}{l|}{\texttt{"}} & \multicolumn{1}{l|}{\texttt{"}}      & \multicolumn{1}{l|}{\texttt{"}} & \multicolumn{1}{l|}{\texttt{"}}             & \multicolumn{1}{l|}{\texttt{"}} & \texttt{"} \\
\multicolumn{1}{l|}{}                           & \multicolumn{1}{l|}{May 2nd-4th}     & \multicolumn{1}{l|}{\texttt{"}} & \multicolumn{1}{l|}{$\nkd{4}{2}{2}$} & \multicolumn{1}{l|}{9}          & \multicolumn{1}{l|}{[12,36,72,144]}  & \multicolumn{1}{l|}{1}          & \multicolumn{1}{l|}{10000}                  & \multicolumn{1}{l|}{\texttt{"}} & \texttt{"} \\ \bottomrule
\end{tabular}
\caption{\textbf{Datasets.} Details of the experiments performed for this work using Quantinuum H1-1 and Quantinuum H1-1 emulator (H1-1E).
For efficiency, on H1-1 we use multiple independent copies of code in parallel, ensuring transport costs are minimized.
Sequence lengths are chosen to approximately minimize variances (see Appendix~\ref{sec:sample-complexity}) given physical error rates and machine time budget.
When Pauli Frame Randomization (PFR) is used we randomize over 10 sub-sequences.}
\label{tab:exp-settings}
\end{table}

\section{Fitting model for sequences of independent detectors under a Pauli noise}
\label{ap:mainprotocol}

Given an input state $\rho$ on the data qubits, then the action of a QEC gadget with the physical Pauli noise $\P$ acting on ancillas and data qubits just before destructively measuring the ancillas in the computational basis can be described via a quantum instrument $\{ \mathcal{I}_{\s} , {\s} \}$.
In this way, we account for both the classical outcomes (i.e. syndromes) and the post-measurement state.
Specifically, we can write 
\begin{align}
	\mathcal{I}_{\s }(\rho)  = \sum_{\m} \P^{\m} (\Pi_{\m \oplus \s} \rho  \Pi_{\m\oplus \s}),
\end{align}
where in this notation $\m$ ranges over all possible bitstrings of length equal to the number of measured stabilizers.
The $\P^{\m}$ is the reduced Pauli error channel acting only on the data qubits which arises from those Pauli errors in $\P$ that flip the outcome according to $\m$. 
More concretely, we write $\P (\cdot) = \sum_{P_a, P_d} p(P_{a} \otimes P_{d}) P_a \otimes P_d ( \cdot ) P_a \otimes P_d $ for some probability distribution $p$ over Pauli errors on ancilla and data qubits which are respectively labelled by $a$ and $d$.
Then, the reduced channel will be written as $\P^{\m} (\rho)= \sum_{P_d}  \left( \sum_{P_a : | \< \m |P_a |{\bf{0}}\> |=1} p(P_a\otimes P_d) \right) P_d \, \rho \,P_d$.
The probability of observing an outcome $\s$ is given by $\textrm{prob}(\s) = \Tr( \I_{\s}(\rho))$ and the post-measurement (normalized) state will be $ \rho \rightarrow \frac{\I_{\s}(\rho)}{\textrm{prob}(\s)}$. 

Now, for a sequence of $r$ QEC gadgets then we have that the post-measurement state once outcomes $\s^1,\ldots, \s^r$ have been obtained will be given by
\begin{align}
\rho \rightarrow \frac{1}{\textrm{prob}(\s^1,\ldots,\s^r)} \mathcal{I}_{\s^r} \circ \ldots \circ \I_{\s^1}(\rho).
\end{align}

\begin{figure*}[t]
\scalebox{0.8}
{\begin{quantikz} 
		& \setwiretype{n} &\gategroup[wires=4,steps=14,style={solid,rounded corners,fill=black!05,inner ysep=20pt, inner xsep=7pt},background,label style={label position=below,anchor=
			north,yshift=-0.2cm}]{{\sc DETECTOR ${\bf{D}} = {\bf{s}}^1 \oplus {\bf{s}}^2 $}}
		\gategroup[4,steps=5, style={dashed,rounded
			corners,fill=blue!10, inner 
			xsep=2pt},background,label style={label
			position=below,anchor=north,yshift=-0.2cm}]{{\sc
				QEC Gadget $\rightarrow {\bf{s}}^1 = (s_1^1, s_2^1)$ }}&  \lstick{$|+\>$} &\ctrl{2} \qw & \qw   &  \gate{H} \qw & \gate[4]{\mathcal{P}} \qw & \meter{s_1^{1}} \qw &  
		\gategroup[4,steps=5, style={dashed,rounded
			corners,fill=blue!10, inner
			xsep=2pt},background,label style={label
			position=below,anchor=north,yshift=-0.2cm}]{{\sc
				QEC Gadget  $\rightarrow {\bf{s}}^2 = (s_1^2, s_2^2)$ }}&     \lstick{$|+\>$}  &\ctrl{2} \qw  & \qw   &  \gate{H} \qw  &  \gate[4]{\mathcal{P}} \qw &\meter{s_1^2} \qw
		\\		
		&\setwiretype{n}  & 
		& \lstick{$|+\>$}&\qw   & \ctrl{1} \qw & \gate{H} \qw & \qw& \meter{s_2^1} \qw & 
		&\lstick{$|+\>$}   &\qw    &  \ctrl{1} \qw & \gate{H} \qw &  \qw& \meter{s_2^2} \qw
		\\
		\lstick[2]{Data qubits}	\qw & \qw & \qw 
		&\qw& \gate[2]{S_1}&  \gate[2]{S_2}& \qw & \qw & \qw &\qw
		&  \qw &     \gate[2]{S_1} & \gate[2]{S_2} &\qw &\qw &  \qw &\qw &\qw
		\\
		&\qw &  \qw& \qw
		& \qw & \qw  &\qw & \qw & \qw & \qw &\qw  
		& \qw  & \qw & \qw  &\qw &\qw&\qw&\qw
	\end{quantikz}}
	\label{fig:repeat-stabilizermeasurement}
	\caption{\textbf{Detector arising from repeated flag-based syndrome measurements with fault-tolerant circuit gadgets.}
	Our physical error model assumes each gadget is subject to the same Pauli channel that acts on all physical qubits (including both data and ancilla).
	This is a weaker assumption than circuit-level noise.}
\end{figure*}
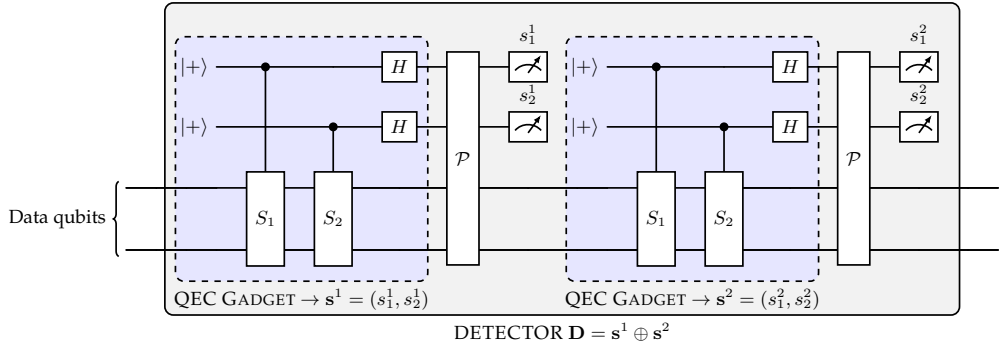

Expanding out in terms of the Pauli error channel and the projectors we get
\begin{align}
	\mathcal{I}_{\s^r} \circ \ldots \circ \I_{\s^1}(\rho)  & = \sum_{\m_1,\m_2\ldots,\m_r} \P^{\m_r} (\Pi_{\m_r \oplus \s^r}( \ldots \P^{\m_1} (\Pi_{\m_1\oplus \s^1}\rho \Pi_{\m_1\oplus \s^1})\ldots )\Pi_{\m_r \oplus \s^r} ), \nonumber \\
	& =  \sum_{\m_1,\m_2\ldots,\m_r} \P^{\m_r} \circ \P^{\m_{r-1}}_{\s^r\oplus\s^{r-1}\oplus \m_r\oplus \m_{r-1}} \circ \ldots \circ \P^{m_1}_{\s^1\oplus\s^2\oplus \m_1\oplus \m_2} (\Pi_{\m_1\oplus \s^1}\rho \Pi_{\m_1\oplus \s^1}),
	\label{eq:multiple-rounds-instrument}
\end{align}
where we used the subscript for the Pauli channels $\P^{\m}_{\e}$  to denote that from $\P^{\m}$ we only keep the contribution of those Pauli operators that have syndrome $\e$.
This follows from propagating the projectors through the Pauli error channel by making use of the fact that $\Pi_{\alpha} P \Pi_{\beta} = P \Pi_{\beta}$ if the Pauli operator $P$ has syndrome $\alpha\oplus \beta$  and zero otherwise.
Furthermore, since these errors are disjoint, summing over all possible syndromes gives the original channel such that $\P^{\m} = \sum_{\e} \P^{\m}_{\e}$.

In particular, the (un-normalized) state after two consecutive gadgets will be given by
\begin{align}
	\I_{\s^2}\circ\I_{\s^1}(\rho) = \sum_{\m_1,\m_2} \P^{\m_2} \circ \P^{\m_1}_{\s^1\oplus \s^2 \oplus \m_1\oplus \m_2}  (\Pi_{\m_1\oplus \s^1}\rho \Pi_{\m_1\oplus \s^1}).
\end{align}\\

We consider the output state given \emph{only} the information on these \emph{disjoint detectors}, as shown in Figure~\ref{fig:independent-detectors}.
This has the effect of discarding some classical information about the measurement outcomes; for instance, if both syndromes in a gadget agree, this will correspond to a trivial detector, and knowing only the latter means that we no longer have information about the individual syndrome outcomes themselves, only that they must be the same.
As a result, observing a particular detector means that a number of events, equal to the size of all possible syndrome outcomes, could have happened.
This marginalization over half of the syndrome outcomes leads to a decoupling of the information transmission through the repeated gadgets.
This way, errors that occur only on the ancillas, no longer propagate into the next detector. 

Forming detectors from disjoint pairs of syndrome measurement outcomes enables the characterization of the error channels associated with each detector.
Concretely, given that we \emph{only} observe the detector $\d$ from two rounds of syndrome measurements, then the (un-normalized) post-measurement state is
\begin{align}
	\E^{\d} (\rho): &= \sum_{\s^1} \mathcal{I}_{\s^1\oplus \d}\circ\mathcal{I}_{\s^1} (\rho) = \sum_{\m_1,\m_2,\s^1} \P^{\m_2}\circ\P^{\m_1}_{\d\oplus\m_1\oplus\m_2} (\Pi_{\m_1\oplus \s^1}\rho \Pi_{\m_1\oplus \s^1}), \\
	& = \sum_{\m_1,\m_2} \P^{\m_2}\circ\P^{\m_1}_{\d\oplus\m_1\oplus\m_2} (\rho) .
\end{align}

In a more formal way,  $\E = \{\E^{\d}, \d \}_{\forall \d}$ is an instrument over the detector outcomes and can be written as a channel from the input state to the output state and classical detector outcomes as $\E(\rho) = \sum_{\d} \E^{\d}(\rho) \otimes \ketbra{\d}$ where $\ket{\d}$ is a classical register. 

Two independent detectors $\d_1$ and $\d_2$ are described by
\begin{align}
\sum_{s^1, s^3} \I_{\d_2\oplus \s^3} \circ \I_{\s^3}\circ \I_{\d_1\oplus\s^1}\circ \I_{\s^1}(\rho) &= \hspace{-0.4cm} \sum_{\substack{\m_1,\ldots, \m_4, \\ \s^1, \s^3}} \P^{\m_4} \circ \P^{\m_3}_{\d_2\oplus \m_3\oplus \m_4} \circ \P^{\m_2}_{\d_1\oplus \s^1 \oplus \s^3 \oplus \m_3\oplus \m_2} \circ \P^{\m_1}_{\d_1\oplus \m_1\oplus \m_2} (\Pi_{\m_1\oplus \s^1}\rho \Pi_{\m_1\oplus \s^1}),\\
&= \hspace{-0.4cm} \sum_{\m_1,\ldots, \m_4} \P^{\m_4} \circ \P^{\m_3}_{\d_2\oplus \m_3\oplus \m_4} \circ \P^{\m_2} \circ \P^{\m_1}_{\d_1\oplus \m_1\oplus \m_2} (\rho),\\
&= \E^{\d_2} \circ \E^{\d_1}(\rho).
\end{align}
We note that, in the above, we use the fact that $\sum_{\s^3} \P^{\m_2}_{\d_1\oplus \s^1\oplus \s^3\oplus \m_3\oplus \m_2} = \P^{\m_2}$, since the sum is over all possible syndromes for the Pauli operators appearing in $\P^{m_2}$, so we are no longer restricted to parts of the channel.

The generalization to $2r$ repeated syndrome measurements follows similarly, in a recursive fashion leading to the following result
\begin{lemma}
	Given that detectors $\d_1,\ldots, \d_r$ are observed from independent syndrome pairs, then the (un-normalized) post-measurement state after the $2r$ rounds is given by 
	\begin{align}
		\sum_{\{\s^{2i -1} \}_{i=1}^{r}} \I_{\d_r \oplus \s^{2r-1}} \circ \I_{\s^{2r-1}} \circ \ldots \circ \I_{\d_1\oplus \s^1} \circ \I_{\s^1} (\rho)  = \E^{\d_r} \circ \ldots \circ \E^{\d_2} \circ \E^{\d_1} (\rho).
		\label{eq:product-independent-detectors}
	\end{align}
\label{lem:indpendent-channels}
\end{lemma}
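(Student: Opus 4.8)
The plan is to prove the identity by induction on $r$, peeling off the final disjoint pair of syndrome extraction gadgets and reducing to the two-gadget case already worked out above. Throughout I would carry one structural invariant: a state $\tau$ is \emph{syndrome-block-diagonal}, $\tau=\sum_{\m}\Pi_{\m}\tau\Pi_{\m}$, iff it commutes with every stabilizer, and this class is closed under conjugation by any Pauli $P$ since $S_i(P\tau P)=(-1)^{\omega(S_i,P)}P\tau P=(P\tau P)S_i$ when $[\tau,S_i]=0$; hence it is closed under each reduced channel $\P^{\m}_{\e}$ and each $\E^{\d}$, and it contains every codespace state $\rho_L$. This is exactly what licenses the step $\sum_{\s}\Pi_{\m\oplus\s}\,\tau\,\Pi_{\m\oplus\s}=\tau$ used to collapse the projector sums. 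The base case $r=1$ is the derivation of $\E^{\d}$ recalled above: expand $\sum_{\s^1}\I_{\d\oplus\s^1}\circ\I_{\s^1}(\rho)$ into Pauli channels and projectors, propagate the inner projector $\Pi_{\m_2\oplus\d\oplus\s^1}$ rightward through $\P^{\m_1}$ using $\Pi_\alpha P\Pi_\beta=P\Pi_\beta$ for $P$ of syndrome $\alpha\oplus\beta$ (and $0$ otherwise) so that $\P^{\m_1}$ is replaced by $\P^{\m_1}_{\d\oplus\m_1\oplus\m_2}$, and sum over $\s^1$ via the invariant to get $\E^{\d}(\rho)=\sum_{\m_1,\m_2}\P^{\m_2}\circ\P^{\m_1}_{\d\oplus\m_1\oplus\m_2}(\rho)$; the case $r=2$ is also spelled out above.

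For the inductive step I would use that the nested sums factor, $\sum_{\{\s^{2i-1}\}_{i=1}^{r}}=\sum_{\s^{2r-1}}\sum_{\{\s^{2i-1}\}_{i=1}^{r-1}}$, and that the first $2r-2$ instruments depend only on $\s^1,\ldots,\s^{2r-3}$ and the fixed detectors. By linearity the left-hand side becomes $\sum_{\s^{2r-1}}\I_{\d_r\oplus\s^{2r-1}}\circ\I_{\s^{2r-1}}(\sigma)$ with $\sigma:=\sum_{\{\s^{2i-1}\}_{i=1}^{r-1}}\I_{\d_{r-1}\oplus\s^{2r-3}}\circ\cdots\circ\I_{\s^1}(\rho)=\E^{\d_{r-1}}\circ\cdots\circ\E^{\d_1}(\rho)$ by the induction hypothesis. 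Since $\rho=\rho_L$ is syndrome-block-diagonal and this is preserved by every $\E^{\d}$, the state $\sigma$ is block-diagonal too, so the $r=1$ computation applies with input $\sigma$ and detector $\d_r$, giving $\sum_{\s^{2r-1}}\I_{\d_r\oplus\s^{2r-1}}\circ\I_{\s^{2r-1}}(\sigma)=\E^{\d_r}(\sigma)=\E^{\d_r}\circ\cdots\circ\E^{\d_1}(\rho)$, closing the induction.

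I expect the only genuinely delicate point to be making precise why disjointness of the syndrome pairs is what forces the factorization — equivalently, why summing the \emph{first} syndrome $\s^{2r-1}$ of the new detector over all values does not re-couple to the already-processed block $\sigma$. In the fully expanded picture this is the identity $\sum_{\s}\P^{\m}_{\,\e_0\oplus\s}=\P^{\m}$ (a syndrome-restricted Pauli channel, summed over all syndromes, returns the full channel), which is exactly what prevents ancilla errors in one detector from propagating into the next; in the inductive presentation above it is packaged into the block-diagonality invariant together with the $r=1$ step. Had the pairs overlapped, $\sigma$ would carry a projector that is not freely summed and the decoupling into a product of $\E^{\d}$'s would fail. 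Everything else is routine bookkeeping: expanding instruments, repeated use of $\Pi_\alpha P\Pi_\beta=P\Pi_\beta$, and sorting the flip labels $\m_1,\ldots,\m_{2r}$ into the $r$ mutually disjoint two-index blocks that each reproduce $\E^{\d_i}$.
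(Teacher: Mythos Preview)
Your induction is correct and mirrors the paper's recursive argument: both peel off the outermost detector pair and reduce to the $(r-1)$-case, with the decoupling achieved in your packaging via the block-diagonality invariant and in the paper via the identity $\sum_{\s}\P^{\m}_{\e_0\oplus\s}=\P^{\m}$ applied directly to the boundary channel $\P^{\m_{2r-2}}$---two views you already identify as equivalent. The one inessential difference is that you restrict to $\rho=\rho_L$, whereas the paper keeps the innermost projectors $\Pi_{\m_1\oplus\s^1}\rho\Pi_{\m_1\oplus\s^1}$ through the recursion and absorbs them into the definition of $\E^{\d_1}$ at the very end, so the statement holds for arbitrary $\rho$; in fact your invariant is automatic after a single $\E^{\d}$ (any $\I_{\s}$ outputs a block-diagonal state), so this restriction is easily lifted.
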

\begin{proof}
The above follows from the following calculation, starting from the $\textrm{LHS}$ and expanding it out using eqn.~(\ref{eq:multiple-rounds-instrument})
\begin{align*}
	&= \textrm{LHS}, \\
	&=\hspace{-0.6cm}   \sum_{\substack{\s^1,\s^3,\ldots,\s^{2r-1}  \\  \m_1,\m_2\ldots,\m_{2r} }} \P^{\m_{2r}} \circ \P^{\m_{2r-1}}_{\d_r \oplus \m_{2r}\oplus \m_{2r-1}} \circ \P^{\m_{2r-2}}_{\s^{2r-1}\oplus \d_{r-1}\oplus \s^{2r-3}\oplus \m_{2r-1}\oplus \m_{2r-2}} \circ \ldots \circ \P^{m_1}_{\d_1\oplus \m_1\oplus \m_2} (\Pi_{\m_1\oplus \s^1}\rho \Pi_{\m_1\oplus \s^1}), \\
	&= \hspace{-0.6cm} \sum_{\substack{\s^1,\s^3,\ldots,\s^{2r-3}  \\  \m_1,\m_2\ldots,\m_{2r} }} \P^{\m_{2r}} \circ \P^{\m_{2r-1}}_{\d_r \oplus \m_{2r}\oplus \m_{2r-1}} \circ \left(\sum_{\s^{2r-1}}\P^{\m_{2r-2}}_{\s^{2r-1}\oplus \d_{r-1}\oplus \s^{2r-3}\oplus \m_{2r-1}\oplus \m_{2r-2}} \right) \circ \ldots \circ \P^{m_1}_{\d_1\oplus \m_1\oplus \m_2} (\Pi_{\m_1\oplus \s^1}\rho \Pi_{\m_1\oplus \s^1}), \\
	& =  \hspace{-0.6cm} \sum_{\substack{\s^1,\s^3,\ldots,\s^{2r-3}  \\  \m_1,\m_2\ldots,\m_{2r} }} \P^{\m_{2r}} \circ \P^{\m_{2r-1}}_{\d_l \oplus \m_{2r}\oplus \m_{2r-1}} \circ \P^{\m_{2r-2}} \circ \P^{\m_{2r-3}}_{\d_{r-1} \oplus \m_{2r-3}\oplus \m_{2r-2}} \circ \ldots \circ \P^{m_1}_{\d_1\oplus \m_1\oplus \m_2} (\Pi_{\m_1\oplus \s^1}\rho \Pi_{\m_1\oplus \s^1}), \\
	& = \E^{\d_{r}} \circ \left(  \sum_{\substack{\s^1,\s^3,\ldots,\s^{2r-3}  \\  \m_1,\m_2\ldots,\m_{2r-2} }}  \P^{\m_{2r-2}} \circ \P^{\m_{2r-3}}_{\d_{r-1} \oplus \m_{2r-3}\oplus \m_{2r-2}} \circ \ldots 
	\circ \P^{m_1}_{\d_1\oplus \m_1\oplus \m_2} (\Pi_{\m_1\oplus \s^1}\rho \Pi_{\m_1\oplus \s^1})\right).
\end{align*}
Now the expression in the bracket is the same as the post-measurement state for $2r-2$ rounds given only detectors $\d_1,\ldots, \d_{l-1}$, and therefore we can recursively apply the same calculation to obtain
\begin{align}
	\textrm{LHS} &= \E^{\d_r} \circ \ldots \circ \E^{\d_2} \sum_{\substack{\s^1 \\ \m_1,\m_2}} \P^{\m_2} \circ\P^{\m_1}_{\d_{1}\oplus \m_{1}\oplus \m_2} (\Pi_{\m_1\oplus \s^1} \rho \Pi_{\m_1\oplus \s^1} ), \\
	& =\E^{\d_r} \circ \ldots \circ \E^{\d_2}\circ\E^{\d_1} (\rho),
\end{align}
which completes the proof of eqn.~(\ref{eq:product-independent-detectors}).
\end{proof}
\begin{lemma}
	Suppose that we prepare input state $\rho$, apply $2r$ repeated rounds to obtain detectors $\d_1,\ldots, \d_r$ from independent consecutive syndrome pairs, and measure a Pauli operator $Q$.
	Then we have that
	\begin{align}
		\mathbb{E}[Q| \d_1,\ldots,\d_{r}, \rho ]  = \prod_{\d} \frac{\lambda_{\d}^{n_{\d}}(Q) }{\lambda_{\d}^{n_{\d}}(\iden)} \Tr(Q\rho),
	\end{align}
where $\d$ ranges over all possible error syndromes and $n_{\d}$ represents the number of times a syndrome $\d$ has  occurred in the sequence of detectors $\d_1,\ldots,\d_r$.
\end{lemma}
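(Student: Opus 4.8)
The plan is to feed the post-measurement state produced by Lemma~\ref{lem:indpendent-channels} into the Pauli-diagonal structure of the detector error channels and then carry out routine Pauli-basis bookkeeping. By eqn.~(\ref{eq:product-independent-detectors}), after marginalizing over the discarded syndrome bits the (un-normalized) state conditioned on the observed detector sequence $\d_1,\ldots,\d_r$ is $\E^{\d_r}\circ\cdots\circ\E^{\d_1}(\rho)$, so that
\begin{align}
	\mathbb{E}[Q|\d_1,\ldots,\d_r,\rho] = \frac{\Tr\!\big(Q\,\E^{\d_r}\!\circ\cdots\circ\E^{\d_1}(\rho)\big)}{\textrm{prob}(\d_1,\ldots,\d_r)},\qquad \textrm{prob}(\d_1,\ldots,\d_r)=\Tr\!\big(\E^{\d_r}\!\circ\cdots\circ\E^{\d_1}(\rho)\big).
\end{align}
Everything then reduces to evaluating numerator and denominator on the right.

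First I would expand $\rho$ in the $n$-qubit Pauli basis, $\rho=2^{-n}\sum_{P\in\gP_n}\Tr(P\rho)\,P$, and use that each (sub-normalized) detector error channel $\E^{\d}$ is a Pauli channel, hence diagonal in the Pauli basis with $\E^{\d}(P)=\lambda_{\d}(P)\,P$ for every $P\in\gP_n$. Composing $r$ of them gives $\E^{\d_r}\!\circ\cdots\circ\E^{\d_1}(\rho)=2^{-n}\sum_{P}\Tr(P\rho)\big(\prod_{i=1}^{r}\lambda_{\d_i}(P)\big)P$. Applying $\Tr(Q\,\cdot\,)$ together with $\Tr(QP)=2^{n}\delta_{Q,P}$ collapses the sum to the single term $P=Q$, yielding numerator $=\Tr(Q\rho)\prod_{i=1}^{r}\lambda_{\d_i}(Q)$; applying $\Tr(\,\cdot\,)$, using $\Tr(P)=2^{n}\delta_{P,\iden}$ and $\Tr(\rho)=1$, yields denominator $=\prod_{i=1}^{r}\lambda_{\d_i}(\iden)$. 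Dividing and then grouping the $r$ factors by the distinct detector values $\d$, each occurring $n_{\d}$ times with $\sum_{\d}n_{\d}=r$, gives $\mathbb{E}[Q|\d_1,\ldots,\d_r,\rho]=\prod_{\d}\big(\lambda_{\d}^{n_{\d}}(Q)/\lambda_{\d}^{n_{\d}}(\iden)\big)\Tr(Q\rho)$, which is the claim. As a by-product one reads off $\textrm{prob}(\d_1,\ldots,\d_r)=\prod_{i}\lambda_{\d_i}(\iden)$, i.e.\ the detectors are independent and the single-detector click probability $\lambda_{\d}(\iden)$ is input-state independent, consistent with the properties exploited in Section~\ref{sec:noisediagnostics}.

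The linear algebra here is routine; the step that actually needs care is the normalization. Because the $\E^{\d}$ are sub-normalized, the conditional expectation is not $\Tr\!\big(Q\,\E^{\d_r}\!\circ\cdots\circ\E^{\d_1}(\rho)\big)$ but that quantity divided by the joint outcome probability, and one must verify that this probability is exactly $\prod_{\d}\lambda_{\d}^{n_{\d}}(\iden)$ and is strictly positive whenever the observed sequence occurs, so that the conditioning is well defined. It is also worth stating up front that the Pauli-diagonal form of each $\E^{\d}$, on which the whole computation rests, is precisely what the Pauli-twirl step of the protocol enforces; without it the eigenvalue relation $\E^{\d}(P)=\lambda_{\d}(P)P$ fails and the product structure is lost. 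Finally, the fact that the answer depends on $\d_1,\ldots,\d_r$ only through the counts $\vec{n}_{\d}$ is immediate from the product form and should be recorded, since it is what makes the fitting model of eqn.~(\ref{eqn:analytical-fit-function}) well posed.
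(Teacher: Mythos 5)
Your proposal is correct and follows essentially the same route as the paper: it invokes Lemma~\ref{lem:indpendent-channels} to write the conditioned state as $\E^{\d_r}\circ\cdots\circ\E^{\d_1}(\rho)$ divided by its trace, then uses the Pauli-diagonal structure $\E^{\d}(P)=\lambda_{\d}(P)P$ to evaluate numerator and denominator, giving $\prod_i\lambda_{\d_i}(Q)\Tr(Q\rho)/\prod_i\lambda_{\d_i}(\iden)$. Your Pauli-basis expansion of $\rho$ is just a cosmetic variant of the paper's use of trace cyclicity with the channel eigenvalue relation, and your added remarks on normalization and state-independence of $\lambda_{\d}(\iden)$ are consistent with the paper.
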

\begin{proof}
	The expected value of a Pauli operator $Q$, given that the initial state is $\rho$ and detectors $\d_1,\ldots,\d_r$ were measured is 
	\begin{align}
		\mathbb{E}(Q|\d_1,\ldots,\d_r, \rho) = \Tr(Q \rho|_{\d_1,\ldots\d_r}),
	\end{align}
where $\rho|_{\d_1,\ldots\d_r}$ is the post-measurement state after observing detectors $\d_1\ldots\d_r$.
From Lemma~\ref{lem:indpendent-channels} we have that the post-measurement state is simply
\begin{align}
	\rho|_{\d_1,\ldots,\d_r} = \frac{\E_{\d_r} \ldots \E_{\d_1} (\rho)}{\Tr( \E_{\d_r} \ldots \E_{\d_1} (\rho) )}.
\end{align}
Therefore
\begin{align}
	\mathbb{E}(Q|\d_1,\ldots,\d_r, \rho) = \frac{\Tr(Q \E_{\d_r} \ldots \E_{\d_1} (\rho) )}{\Tr( \E_{\d_r} \ldots \E_{\d_1} (\rho) )} =  \frac{\lambda_{\d_r}(Q) \, \ldots \, \lambda_{\d_1}(Q) \Tr(Q\rho)}{\lambda_{\d_r}({\iden}) \ldots \lambda_{\d_1}(\iden)},
\end{align}
where the last equality follows from cyclicity of trace and the fact that detector error channels are Pauli so that $\E^{\d} (Q) = \lambda_{\d}(Q) Q$.
\end{proof}

\begin{lemma}
	Given a Pauli channel, suppose that we can determine the eigenvalues of the channel $\lambda(Q)$ wherever $Q\in \gL\cdot \gS$, that is for all the Pauli operators that commute with the stabilizer group.
	Then, we can determine the error probabilities up to a stabilizer, namely
	\begin{align}
		p(P\S) := \frac{1}{|\S|}  \sum_{S\in \S} p(PS),
	\end{align}
	for all Pauli operators $P$. 
	\label{lem:eigenvalueslogical}
\end{lemma}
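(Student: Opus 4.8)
The plan is to recast the stated relation between Pauli-channel eigenvalues and error probabilities as a $\mathbb{Z}_2$ Fourier (Walsh--Hadamard) transform on the symplectic space $\mathbb{Z}_2^{2n}$, and then to show that averaging the probabilities over cosets of the stabilizer group corresponds, on the Fourier side, to discarding every mode except those indexed by the normalizer $\gL\S$. Concretely, this amounts to establishing the displayed inversion identity $p(P\S\mid\d)\propto\sum_{Q\in\gL\S}\lambda_{\d}(Q)(-1)^{\omega(P,Q)}$ already stated in Section~\ref{sec:protocol}, which is what licenses restricting the learned eigenvalues to the normalizer.

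First I would fix the standard dictionary identifying Pauli operators in $\P_n$ modulo phase with vectors in $\mathbb{Z}_2^{2n}$, under which the bicharacter $\omega$ becomes a nondegenerate symplectic form, so that $Q\mapsto(-1)^{\omega(\,\cdot\,,Q)}$ runs over all characters of the group. For a Pauli channel $\sum_{P}p(P)\,P(\cdot)P$ one has $\lambda(Q)=\sum_{P}p(P)(-1)^{\omega(P,Q)}$, i.e.\ $\lambda=\widehat{p}$, and inverting gives $p(P)=2^{-2n}\sum_{Q\in\P_n}\lambda(Q)(-1)^{\omega(P,Q)}$. Next I would identify $\S$ (mod phase) with an isotropic subspace $\mathcal{S}\subseteq\mathbb{Z}_2^{2n}$ of dimension $n-k$, and recall from the coset description of logical operators in Section~\ref{sec:conditional-qec-channels} that the set of Paulis commuting with every stabilizer is exactly the symplectic complement $\mathcal{S}^{\perp}=\{Q:\omega(Q,S)=0\ \forall S\in\S\}=\gL\S$, which has dimension $n+k$.

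The core step then uses character orthogonality on the finite abelian group $\S$: for any $Q$, the sum $\sum_{S\in\S}(-1)^{\omega(S,Q)}$ equals $|\S|$ when the character $S\mapsto(-1)^{\omega(S,Q)}$ is trivial, i.e.\ when $Q\in\mathcal{S}^{\perp}=\gL\S$, and vanishes otherwise. Substituting the inversion formula for $p$ into $p(P\S)=\frac{1}{|\S|}\sum_{S\in\S}p(PS)$, interchanging the finite sums, and applying this identity kills every term with $Q\notin\gL\S$, leaving $p(P\S)$ expressed as a Walsh--Hadamard sum over the restricted set $\{\lambda(Q)\}_{Q\in\gL\S}$ alone. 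Hence knowledge of the eigenvalues on $\gL\S$ suffices to reconstruct every coset probability $p(P\S)$; moreover, since all $Q\notin\gL\S$ enter the coarse-grained quantities with coefficient zero, these are exactly the learnable data.

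The genuinely structural ingredient --- and the one I would be most careful with --- is the symplectic-complement identity that $\sum_{S\in\S}(-1)^{\omega(S,Q)}$ equals $|\S|$ precisely when $Q\in\gL\S$, together with the identification $\mathcal{S}^{\perp}=\gL\S$ coming from the normalizer/stabilizer coset picture. Everything else (the Fourier inversion, the interchange of finite sums) is routine, with the only real obstacle being careful bookkeeping of the normalization constants $|\S|=2^{n-k}$, $|\gL\S|=2^{n+k}$, $|\mathbb{Z}_2^{2n}/\mathcal{S}|=2^{n+k}$, so that the prefactor in the final inversion formula comes out as claimed.
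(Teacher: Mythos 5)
Your proposal is correct and follows essentially the same route as the paper's proof: apply the Walsh--Hadamard inversion to each $p(PS)$, average over $S\in\S$, and use the character-sum identity $\sum_{S\in\S}(-1)^{\omega(S,Q)}=|\S|\,\delta_{Q\in\N(\S)}$ to kill all modes outside the normalizer $\gL\S$. The only cosmetic difference is that you invoke character orthogonality on the isotropic subspace directly, whereas the paper verifies the same identity by expanding $S$ in stabilizer generators and summing over the exponent bitstrings.
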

\begin{proof}
	Given a fixed Pauli operator $P$, the Walsh-Hadamard transform relating probabilities and eigenvalues of an error channel implies that, for every $S$, 
	\begin{align}
		p(PS) = \frac{1}{4^n}\sum_{Q\in \gP_n} \lambda(Q) (-1)^{\omega(PS, Q)}.
    \end{align}
	Marginalizing over all Pauli operators in the stabilizer group results in	
	\begin{align}
		p(P\S) &= \frac{1}{4^n 2^{n-k}} \sum_{\substack{Q\in \gP_n,\\ S\in\S}} \lambda(Q) (-1)^{\omega(PS,Q)}, \\
		&  =  \frac{1}{4^n 2^{n-k}} \sum_{\substack{Q\in \gP_n }} \lambda(Q) (-1)^{\omega(P,Q)}  \left(\sum_{S\in \S} (-1)^{\omega(S,Q)}\right).
	\end{align}
	Now, the RHS bracket above is non-zero only for elements of the normalizer of the stabilizer group, which is defined as $\N(\S) = \{ Q\in \gP_n :  [Q, S] = 0 \  \   \forall S\in \S\}$.
	Specifically,
	\begin{align} \label{eq:normalizerproj}
		 \sum_{S\in \S} (-1)^{\omega(S,Q)} = |\S| \delta_{Q \in \N(\S) }.
	\end{align}
	This is because, if $\S= \<S_1,\ldots, S_{n-k}\>$, then any element $S\in \S$ will be given by $S  = S^{a_1}_1 \ldots  S_{n-k}^{a_{n-k}}$ for $a_i \in \{0,1\}$.
	Suppose that $Q$ has error syndrome ${\bf{e}}$.
	If the dot product modulo 2 is given by ${\bf{e}} \cdot {\bf{a}} = e_1 a_1 \oplus e_2 a_2 \ldots\oplus e_{n-k} a_{n-k} $, then $(-1)^{\omega(S,Q)} = (-1)^{{\bf{e}\cdot a}} $.
	Summing over all ${\bf{a}} \in \{0,1\}^{n-k}$, we get either $2^{n-k}$ if $\bf{e}$ is the zero vector or zero otherwise.
	This means $\sum_{S\in \S} (-1)^{\omega(S,Q)} = \sum_{\bf{a}} (-1)^{\bf{e}\cdot a} = \delta_{\bf{e},0} |\S|$.
	Furthermore, the Pauli operators with trivial syndrome correspond exactly to those that commute with all elements of the stabilizer group, hence we obtain eqn.~(\ref{eq:normalizerproj}).
	Finally,
	\begin{align}
		p(P\S) = \frac{1}{4^n} \sum_{Q\in \N(\S)} \lambda(Q) (-1)^{\omega(P,Q)}.
	\end{align}
\end{proof}

\section{Non-Pauli behaviour}
\subsection{Unitary errors within syndrome extraction affecting detector click rates}\label{sec:non-pauli-effects}
Consider a QEC gadget that measures a stabilizer $S$ on the state $\F(\rho)$ where $\F$ is some error channel, and $\rho$ is a logical state stabilized by $S$.
The probability of a non-trivial outcome to the gadget will therefore be proportional to the expectation value $\tr[S \F(\rho)]$. 
If $\F$ is simply a Pauli channel then we have $\tr[S \F(\rho)] = \lambda_{\F}(S) \tr[S \rho] = \lambda_{\F}(S) \tr[\rho] = \lambda_{\F}(S)$, as all logical states $\rho$ are invariant under the stabilizer. 
Therefore, for Pauli noise, there is no state dependence in the probability of a non-trivial syndrome. 

However, now consider the case where $\F$ contains some unitary perturbation such that $\F(\rho) := (1-\eta) \P(\rho) + \eta U \rho U^\dagger$ for a Pauli channel $\P$, unitary operation $U$ and small constant $\eta$. 
We have $\tr[S \F(\rho)] = (1-\eta) \lambda_{\P}(S) + \eta \tr[S U \rho U^\dagger ] $, however (through the cyclical property of the trace) the perturbative term is proportional to another observable $ \tr[U^\dagger S U \rho ]$. 
If $U^\dagger S U$ is outside of the stabilizer group then $\tr[S \F(\rho)]$ will have a state dependence proportional to $\eta$.

Putting this together, statistically significant differences in the probability of a non-trivial syndromes for a QEC gadget between logical input states are a witness to non-Pauli behaviour. Here we have used unitary errors,  but the argument is similar for other state-dependent forms of error, such as leakage.

\subsection{Leakage errors within state preparation and measurement}
Within both post-processing methods we add a small offset to the fit derived in eqn.~(\ref{eqn:analytical-fit-function}), to account for non-Pauli effects within state preparation and measurement.
We now show why this parameter may be required using a simplified leakage model.
Consider a leakage channel $\F(\rho) := (1-\eta) \rho + \eta \ \tr[\rho] \ketbra{2}$ for small constant $\eta$ where $\ketbra{2}$ is state outside of the computational sub-space occupied by $\rho$.
Typically (within a trapped-ion device) measurement of a leaked qubit returns a fixed outcome, so for any given observable $Q$ we write $\tilde{Q} = Q \oplus \ketbra{2}$ for the generalized operator such that $\tr[\tilde{Q} \ketbra{2}] = 1$.
Putting this together, we have $\tr[\tilde{Q} \F(\rho)] = (1-\eta) \tr[Q \rho] + \eta$ giving an component independent of $\rho$.
Therefore, given this leakage occurs within final basis rotations and measurement, we could expect a small offset in any observable, independent of the previous state of the system.
To account for this, we add an additional constant SPAM parameter to the fit in eqn.~(\ref{eqn:least-squares-fit}) and eqn.~(\ref{eqn:bayes-fit}).

\section{Sample complexity and data processing with a weighted linear regression under a Pauli noise model}\label{sec:sample-complexity}

For a Pauli noise model, we have shown that the expectation value of a Pauli operator in the normalizer $Q\in \mathbb{L}\cdot \mathbb{S}$ given a sequence of independent detector outcomes is given by the following fitting model 
\begin{align} \mathbb{E}(Q|_{\vec{n}_{\d}})  = A_{Q} \prod_{\d}\lambda_{\d}^{n_{\d}}(Q).
\label{eq:fit}
\end{align}

Experimentally, in each shot we obtain a sequence of syndromes from which we calculate the detector counts  $\vec{n}_{\d}$ and a measurement outcome $q^{j} \in \{0,1\}$ of the observable $Q$.
The row vector $\vec{n}_{\d}$ is indexed by all the possible detector values $\d$ with each entry being the number of times it occurred $n_\d$.
If all syndromes of an $[[n,k,d]]$ code are measured by the gadget we want to characterize, then the vector's length will be $2^{n-k}$.
Furthermore, the row vector $\vec{n}_{\d}$ can be seen as a random variable that follows a multinomial distribution with parameters including $r$ (the total number of detectors), and $\vec{p}(\d)$, where the latter corresponds to the vector of probabilities $p(\d)$ for obtaining a detector of value $\d$, for all possible outcomes.
The probability that we obtain a vector of detector counts with values $\vec{n}_{\d}$ is given by
\begin{align}
\mathbb{P}( \vec{n}_{\d}) = \frac{r!}{\vec{n}_{\d}!} \prod_{\d} p^{n_{\d}}(\d).
\end{align}
Suppose that we take $N$ shots of final measurements of $Q$, for a sequence of $r$ detectors, with input state a $+1$ eigenstate of $Q$.
Denote by $N_{\vec{n}_\d}$ the number of shots in which a particular sequence $\vec{n}_{\d}$ occurs.
We then construct the sample mean estimators $\bar{q}|_{\vec{n}_{\d}} = \frac{1}{N_{\vec{n}_{\d}}} \sum_{j=1}^{N-{\vec{n}_{\d}} } q^{(j)}$ which is an unbiased estimator for the expectation value $\mathbb{E}(Q|_{\vec{n}_{\d}})$ with variance $\mathbb{V}( \bar{q}|_{\vec{n}_{\d}}) = \frac{\sigma_0}{N \mathbb{P}(\vec{n}_{\d})}$ where $\sigma_0$ is the single-shot variance for the Bernoulli RV $Q|_{\vec{n}_{\d}}$ so that $\sigma_0 = \left( 1 - \frac{\mathbb{E}(Q|_{\vec{n}_{\d}}) + 1}{2} \right) \frac{\mathbb{E}(Q|_{\vec{n}_{\d}}) +1}{2}$. 
We construct a matrix $\bf{M}$ with rows consisting of all the \emph{unique} detector counts vectors $\vec{n}_{\d}$ obtained from the experiment, including data obtained for different sequence lengths.
We append a single column with all entries equal to $1$. 

In a regime where we can take logarithms of eqn.~(\ref{eq:fit}), finding parameters of our fit model can be determined from a linear regression 
\begin{align}
	\bf{y} = \bf{M} \bf{x},
\end{align}
where $\bf{y}$ is a column vector obtained from the experimental data, with entries given by the logarithm of the sample mean estimator of the final measurement $\bar{q}|_{\vec{n}_{\d}}$ corresponding to the particular vector of detector counts in the row entries of $\bf{M}$.
We seek to find the solution given by the column vector $\bf{x}$  which corresponds to $ \left[\begin{array}{c} \log \boldsymbol{\lambda}_{\d} \\ \log(A_Q)\end{array} \right]$, with entries given by the logarithm of the eigenvalues $\lambda_{\d}(Q)$.

Formally, the weighted least squares will give the following estimator for $\bf{x}$:
\begin{align}
\hat{\bf{x}} = {(\bf{M} ^{T}W \bf{M})^{-1} \bf{M}^{T}} {\bf{ W}} \bf{y},
\end{align}
where $\bf{W}$ is a diagonal matrix with entries given by $\frac{1}{\mathbb{V}(\bf{y}_i)}$, the inverse variances. 
Using a first-order Taylor series approximation, $ \mathbb{V}[{\bf{y}}_i] = \mathbb{V}[\log(\bar{q}|_{\vec{n}_{\d}})] \approx \frac{\mathbb{V}[\bar{q}|_{\vec{n}_\d}]}{\mathbb{E}[Q|\vec{n}_{\d}]^2}$ the diagonal entries of $\bf{W}$ are given by $\frac{(\mathbb{E}[Q|\vec{n}_{\d}]^2) N\mathbb{P}(\vec{n}_{\d})}{\sigma_0} $. The variances on the eigenvalue estimators are given by diagonal elements of the covariance matrix,
\begin{align}
	\mathbb{V}(\bf{x}_{i})  = \left( \bf{M} ^{T}W \bf{M} \right)^{-1}_{ii}.
\end{align}

\section{Channel probabilities within post-processing methods}\label{sec:channel-probs-issues}
In both post-processing methods, we estimate a set of Pauli eigenvalues with confidence (or credible) intervals before performing a deterministic Fourier transform to obtain sets of probabilities.
In the realistic regime, generally, the eigenvalues will be close to one, and many probabilities will be close to zero. 
Therefore, due to finite sampling, we can end up with slightly negative means for probabilities when we perform the final inversion, though typically with uncertainty bars (credible intervals) overlapping with zero.
This is similar to the issue of catastrophic cancellation in numerical analysis \cite{benz2012dynamic}.

To remove this effect, for the frequentist approach, we could project our point estimates into the space of positive probabilities with a further least squares fit \cite{harper2020efficient}.
With Bayesian methods, we can build the estimated probabilities into the model at the highest level of the hierarchy in Figure~\ref{fig:bayes-model}.
As we aim to constrain the probabilities to be positive and to sum to one, the prior we must use is given by a \textit{Dirichlet} distribution.
However, such distributions cannot be used as non-informative priors, due to the summation constraint.
Therefore, we would restrict using this larger model to cases where an informative prior can be placed on probabilities; such as from device-level benchmarks or previously obtained estimates from LSD-DRT.

\section{Figures}

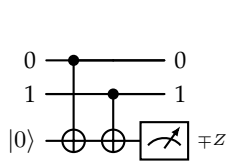
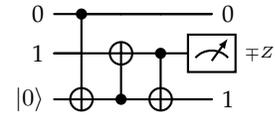
\begin{figure}[h]
    \begin{subfigure}{0.47\textwidth}
        \centering
        \begin{quantikz}[row sep=0.3cm, column sep = 0.2cm]
        \lstick{0  } & \ctrl{2} & \qw      & \rstick{0} \qw \\
        \lstick{1  } & \qw	  & \ctrl{1} & \rstick{1} \qw \\
        \lstick{ $\ket{0}$  } & \targ{} & \targ{} & \meter{} \rstick{$\scriptstyle \mp Z$}  \\
    	\end{quantikz}
            \caption{\textbf{Gadget of depth 2.}
            A weight-1 $X$ error before the final auxiliary gate propagates to a weight-1 $X$ error on the data qubits, which is caught by the next round of error detection.}
            \label{fig:fault-tolerant-211-gadget}
    \end{subfigure}
	\hspace{0.2cm}
    \begin{subfigure}{0.47\textwidth}
         \begin{quantikz}[row sep=0.2cm, column sep = 0.2cm]
        \lstick{0  }   & \ctrl{2}  &     \qw       &       \qw         & \rstick{0} \qw \\
        \lstick{1  }   &     \qw      & \targ{}    & \ctrl{1}     &  \meter{} \rstick{$\scriptstyle \mp Z$}  \\
        \lstick{ $\ket{0}$  }     & \targ{}   & \ctrl{-1}  & \targ{}          & \rstick{1} \qw \\
    	\end{quantikz}
            \caption{\textbf{Leakage-protected (LP) gadget of depth 3.}
            By swapping which physical qubits represent the  auxiliary and data qubits, we can systematically readout all physical qubits, reducing leakage errors to a constant level proportional to the depth of the gadget.}
            \label{fig:lp-fault-tolerant-211-gadget}
    \end{subfigure}
   \caption{\textbf{Fault-tolerant flag syndrome extraction gadgets for the $\nkd{2}{1}{1}$ code.}}
   \label{fig:211-gadgets}
\end{figure}

\begin{figure*}[h]

	\begin{subfigure}{0.90\textwidth}
        \centering
        \begin{quantikz}[row sep={1cm,between origins}, column sep=0.6cm]
		\lstick[brackets=none]{...} 
		& \gate[wires=2]{QEC} \gategroup[2,steps=2, style={dashed,rounded corners, inner ysep=2pt, color = black},background,label style={label position=below,anchor=north, yshift =-0.2cm, xshift=0.2cm, color = black}]{{\sc Detector $\d_i$}}  
		& \gate[wires=2]{QEC} \gategroup[2,steps=2, style={dashed,rounded corners, inner xsep=0pt, color = gray},background,label style={label position=above,anchor=north, yshift =0.2cm, color = gray}]{{\sc Detector $\d_{i+1}$}} 
		& \gate[wires=2]{QEC} \gategroup[2,steps=2, style={dashed,rounded corners, inner ysep=2pt, color = black},background,label style={label position=below,anchor=north, yshift =-0.2cm, xshift=0.2cm, color = black}]{{\sc Detector $\d_{i+2}$}} 
		& \gate[wires=2]{QEC}  \gategroup[2,steps=2, style={dashed,rounded corners, inner xsep=0pt, color = gray},background,label style={label position=above,anchor=north, yshift =0.2cm, color = gray}]{{\sc Detector $\d_{i+3}$}} 
		&  \gate[wires=2]{QEC}  \gategroup[2,steps=2, style={dashed,rounded corners, inner ysep=2pt, color = black},background,label style={label position=below,anchor=north, yshift =-0.2cm, xshift=0.2cm, color = black}]{{\sc Detector $\d_{i+4}$}} 
		& \gate[wires=2]{QEC} \gategroup[2,steps=2, style={dashed,rounded corners, inner xsep=0pt, color = gray},background,label style={label position=above,anchor=north, yshift =0.2cm, color = gray}]{{\sc Detector $\d_{i+5}$}} 
		&  \gate[wires=2]{QEC}  \gategroup[2,steps=2, style={dashed,rounded corners, inner ysep=2pt, color = black},background,label style={label position=below,anchor=north, yshift =-0.2cm, xshift=0.2cm, color = black}]{{\sc Detector $\d_{i+6}$}}
		& \gate[wires=2]{QEC}  
		& \qw \rstick[brackets=none]{...}\\
		\lstick[brackets=none]{...} & \qw  & \qw & \qw & \qw  &\qw &\qw &\qw  &\qw &\qw   \rstick[brackets=none]{...}
		\end{quantikz}
		\caption{ \textbf{Overlapping detectors within a sequence.} }
    \end{subfigure}

    \begin{subfigure}{0.80\textwidth}
		\includegraphics[width=0.80\textwidth]{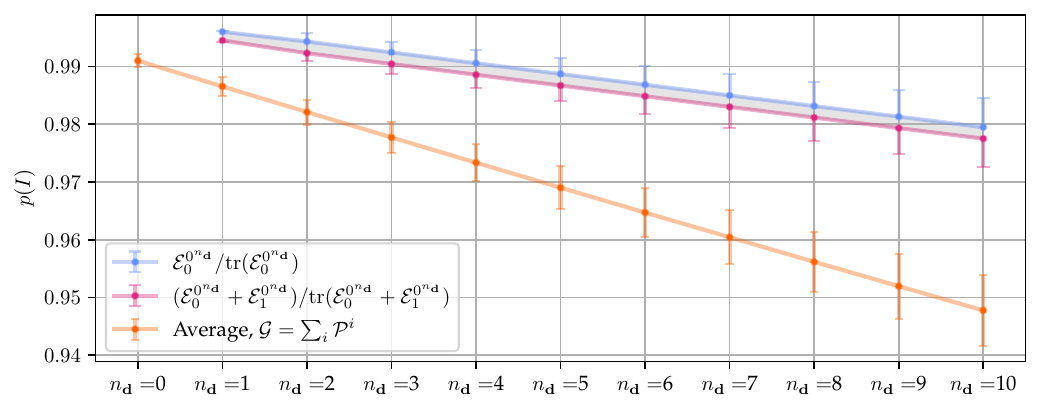}
   \caption{\textbf{Logical fidelity for sequences of overlapping detectors.} For length $n_{\d} =0$, a single gadget is performed.}
    \end{subfigure}
	\caption{ \textbf{Performance bounds on logical fidelity for post-selection using the $\nkd{2}{1}{1}$ code on H1-1. }
	Using the channels estimated in Figure~\ref{fig:bound-channels-2,1,1-bayes}, error channels $\E^{\d_0,...,\d_N}_E$ associated with finite sequences of overlapping detectors can be calculated numerically, where $\d_0,...,\d_N=0^{n_{\bf d}}$ indicates that all detectors had outcome zero (e.g. post-selection). }
	\label{fig:fid-bounds-2-1-1}
\end{figure*}

\tikzset{
leak/.style={circle,fill=ibm1, draw = ibm1, line width=0.01pt, inner sep = 0.0, minimum size = 0.03pt, outer sep = 0.0}
}

\begin{figure}[h]
	\centering
	\begin{quantikz}[row sep=0.2cm, column sep = 0.2cm]
          & \ctrl{2}  & \qw       & \qw       & \qw       & \qw 		& \qw 		& \qw		& \targ{} 	& \ctrl{1} 	& \meter{s_{2}}  & \ket{0} 	& \targ{} 	& \ctrl{2} 	& \targ{} 	& \qw 		& \qw 		& \qw 			\\
           &     \qw   & \qw       & \targ{}   & \ctrl{1}  & \meter{s_1}  	& \ket{0}  	& \targ{} 	& \ctrl{-1}	& \targ{} 	& \qw 		& \qw 		& \ctrl{-1}	& \qw 		& \qw 		& \qw 		& \qw 		& \qw 			\\
\lstick{$\ket{0}$  }   & \targ{}   &    \scaleto{\color{ibm1} \displaystyle  \bullet}{10pt} \qw    & \ctrl{-1} & \targ{}   & \qw 		& \qw 		& \ctrl{-1}	& \qw		& \qw		& \qw 		& \qw 		& \qw 		& \targ{} 	& \ctrl{-2} & \meter{s_{3}}  & \ket{0} 	& \qw 			\\
    	\end{quantikz}
	\caption{\textbf{Rounds of syndrome extraction with leakage protection for the $\nkd{2}{1}{1}$ code.}
	If a leakage event occurs at {\Large $\color{ibm1} \displaystyle  \bullet$} then -- in the simplest instance -- subsequent CNOT operations involving this qubit will not be applied, and measurement $s_{i+2}$ will return $1$.
	If independent detectors are created $\d_1 = s_1 \oplus s_2$, $\d_2 = s_3 \oplus s_4$, \dots, then leakage at {\Large $\color{ibm1} \displaystyle  \bullet$} can induce correlations between $\d_1$ and $\d_2$ but not between $\d_1$ and $\d_3$ and higher.
	}
	\label{fig:leakage-in-2,1,1}
\end{figure}

\begin{figure*}[h]
    \begin{subfigure}{0.49\textwidth}
        \centering
        \includegraphics[height=6cm]{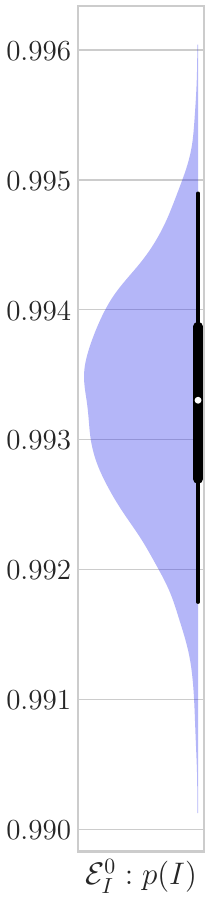}
		\includegraphics[height=6cm]{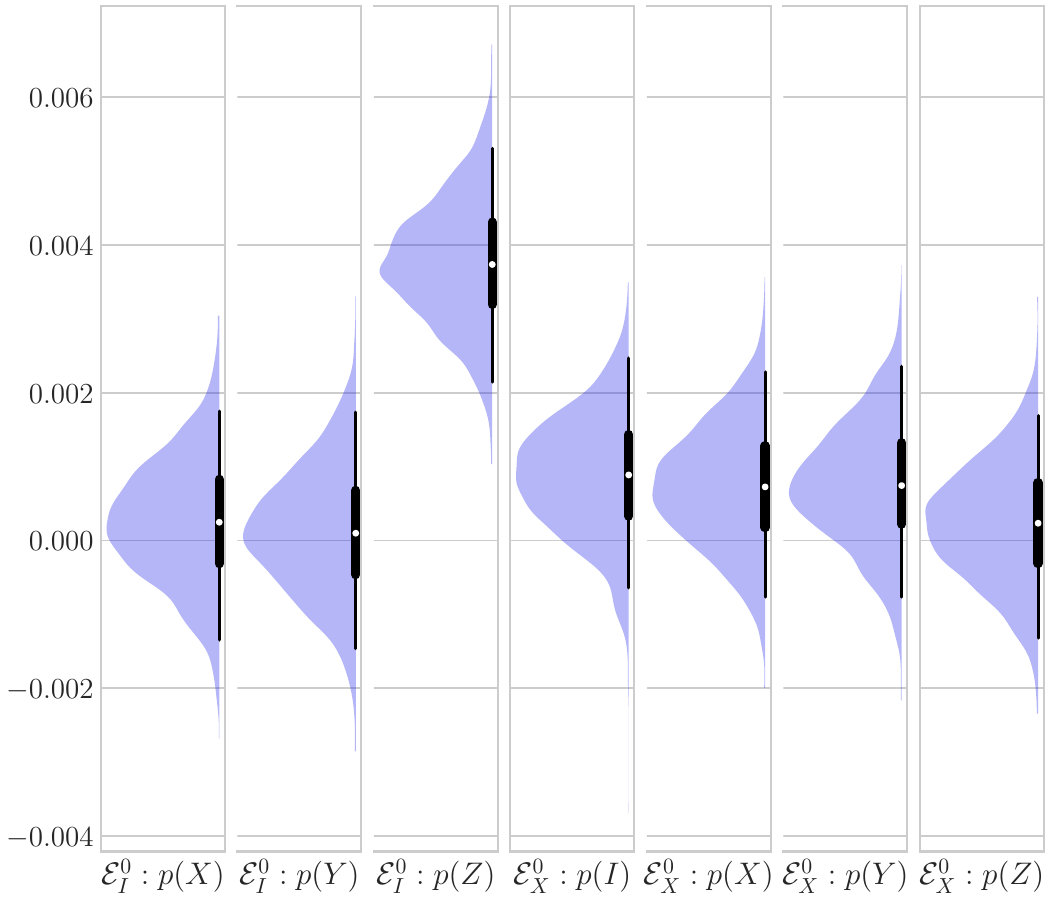}
		\caption{ \textbf{ $\E^{0}$} with $p_{\text{H1-1}}(\d=0) = 0.9896 \pm 0.0002$. }
    \end{subfigure}
	\hspace{0.1cm}
	\begin{subfigure}{0.49\textwidth}
		\includegraphics[height=6cm]{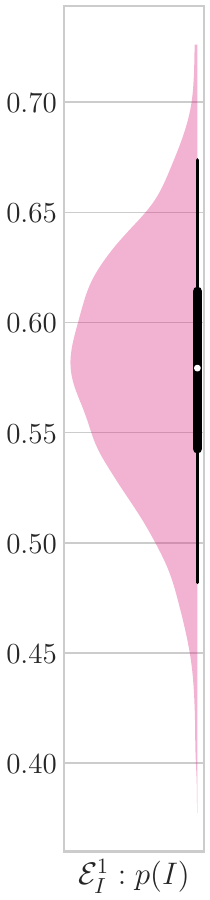}
		\includegraphics[height=6cm]{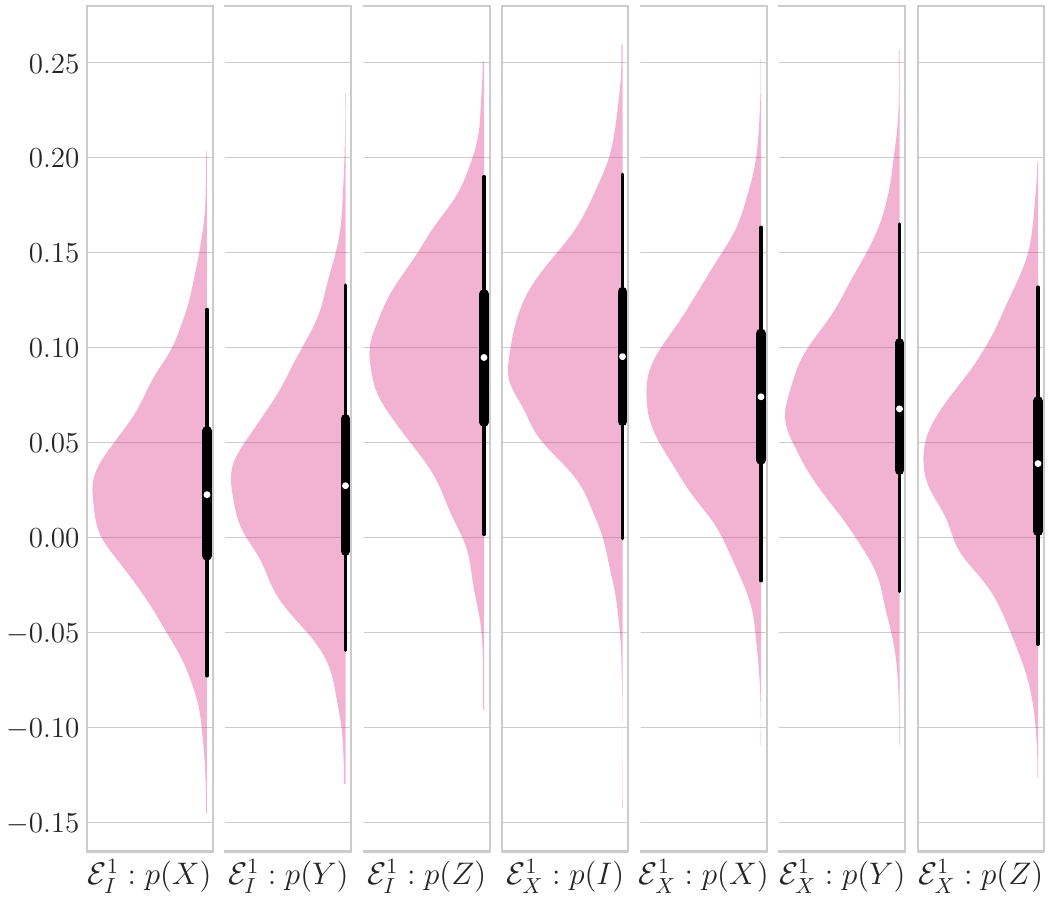}
		\caption{ \textbf{ $\E^{1}$ } with $p_{\text{H1-1}}(\d=1) = 0.0104 \pm 0.0002$.  }
    \end{subfigure}

	\vspace{5mm}

\begin{subfigure}{0.99\textwidth}
		\includegraphics[height=6cm]{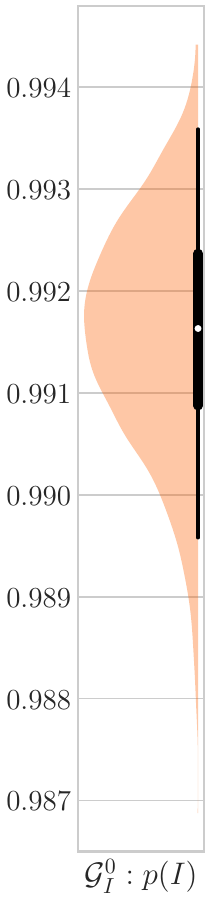}
		\includegraphics[height=6cm]{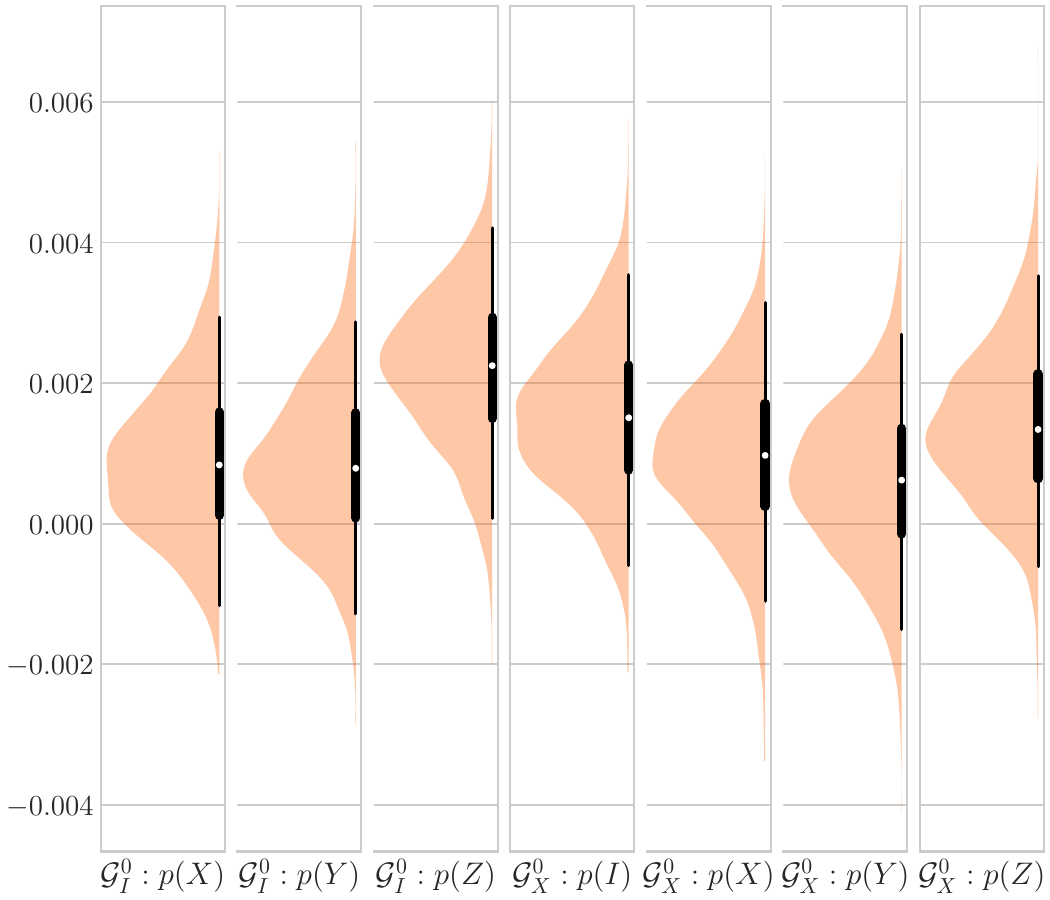}
		\caption{ \textbf{$\G(\rho) := \sum_{\m} \P^{\m}(\rho) = \G_{I}(\rho) + IX \G_{X}(\rho) IX $}, average channel per gadget.  }
    \end{subfigure}

   \caption{\textbf{Bayesian LSD-DRT for the $\nkd{2}{1}{1}$ code.}
   Marginal posterior distributions for the physical channels $\E^{\d}(\rho) := \E^\d_{I}(\rho) + IX \E^\d_{X}(\rho)IX$ are estimated through LSD-DRT (see Figure~\ref{fig:bayes-model}) using uniform priors on all eigenvalues.
   Data gathered using Quantinuum H1-1 with the LSD-DRT protocol, see Table \ref{tab:exp-settings}.  }
   \label{fig:LSD-2,1,1-bayes-uniform-prior}
\end{figure*}

\begin{figure*}[h]
    \begin{subfigure}{0.8\textwidth}
        \centering
        \includegraphics[width=\textwidth]{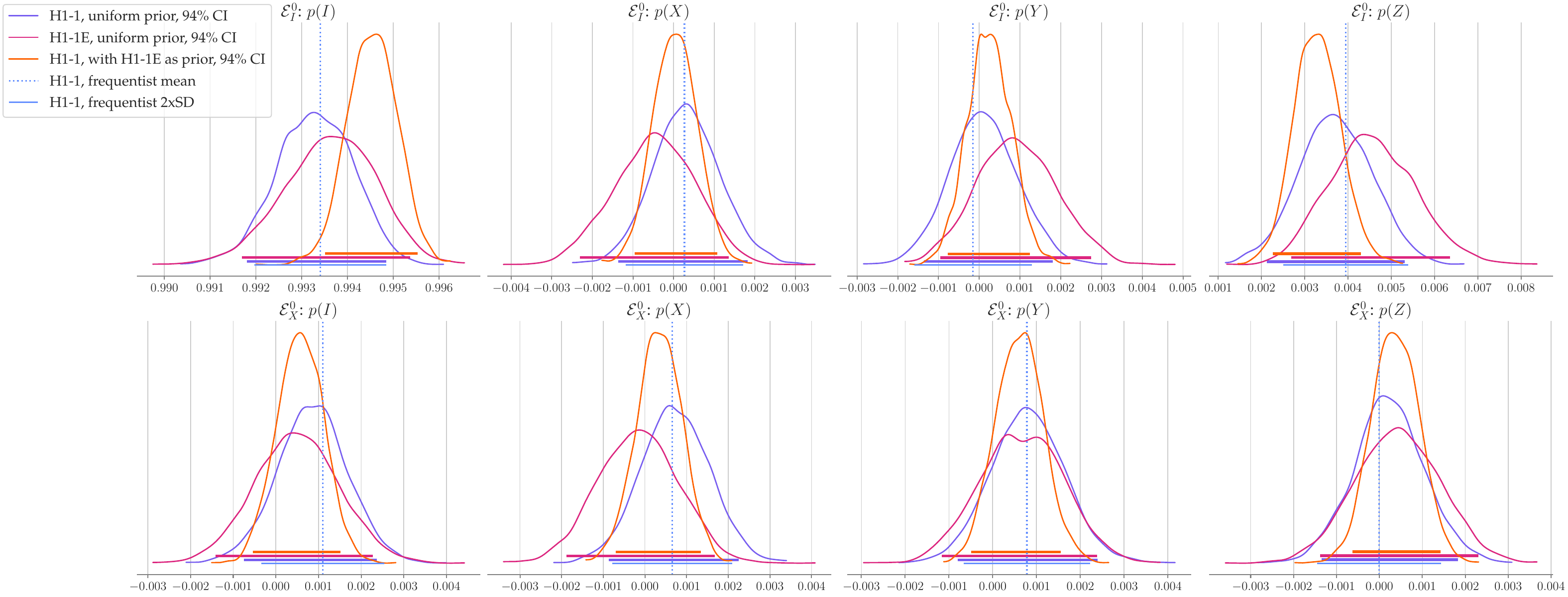}
		\caption{ \textbf{ $\E^{0}$} with $p_{\text{H1-1}}(\d=0) = 0.9896 \pm 0.0002$ and $p_{\text{H1-1E}}(\d=0) = 0.9911 \pm 0.0002$. }
    \end{subfigure}

	\vspace{5mm}

\begin{subfigure}{0.8\textwidth}
		\includegraphics[width=\textwidth]{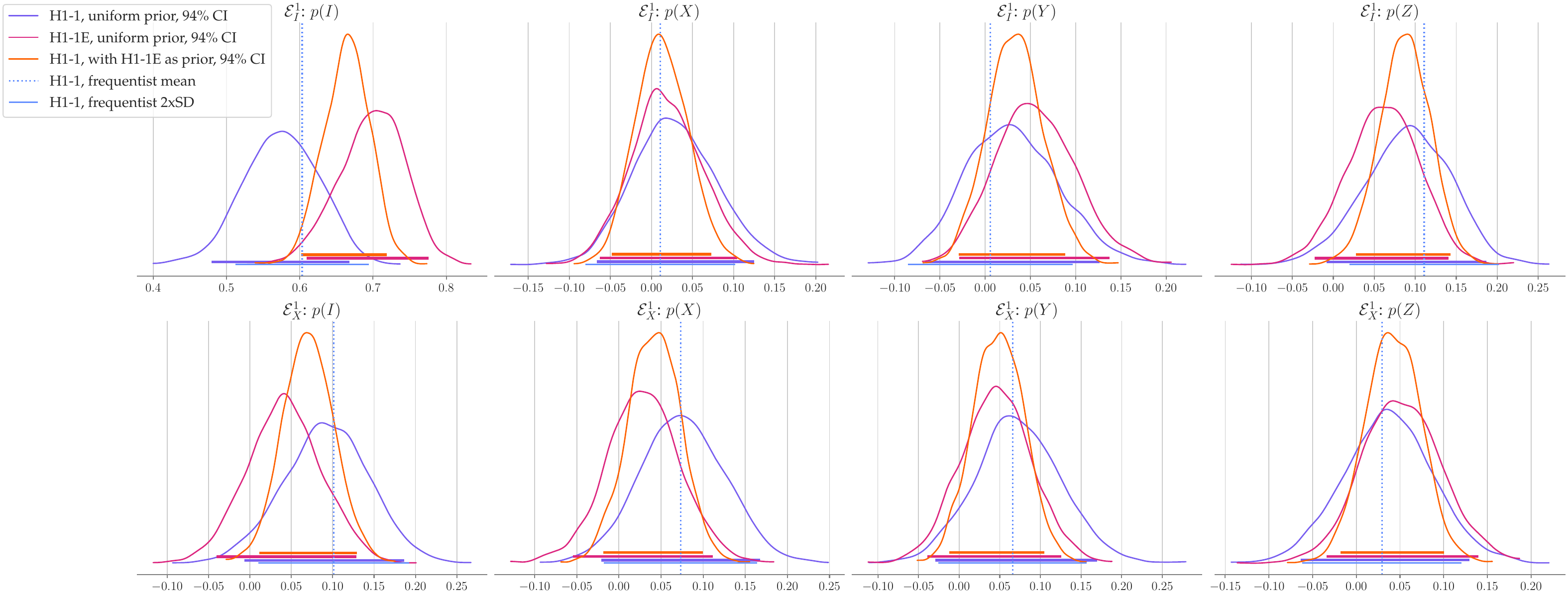}
		\caption{ \textbf{ $\E^{1}$ } with $p_{\text{H1-1}}(\d=1) = 0.0104 \pm 0.0002$ and $p_{\text{H1-1E}}(\d=1) = 0.0089 \pm 0.0002$.  }
    \end{subfigure}

	\vspace{5mm}

\begin{subfigure}{0.8\textwidth}
		\includegraphics[width=\textwidth]{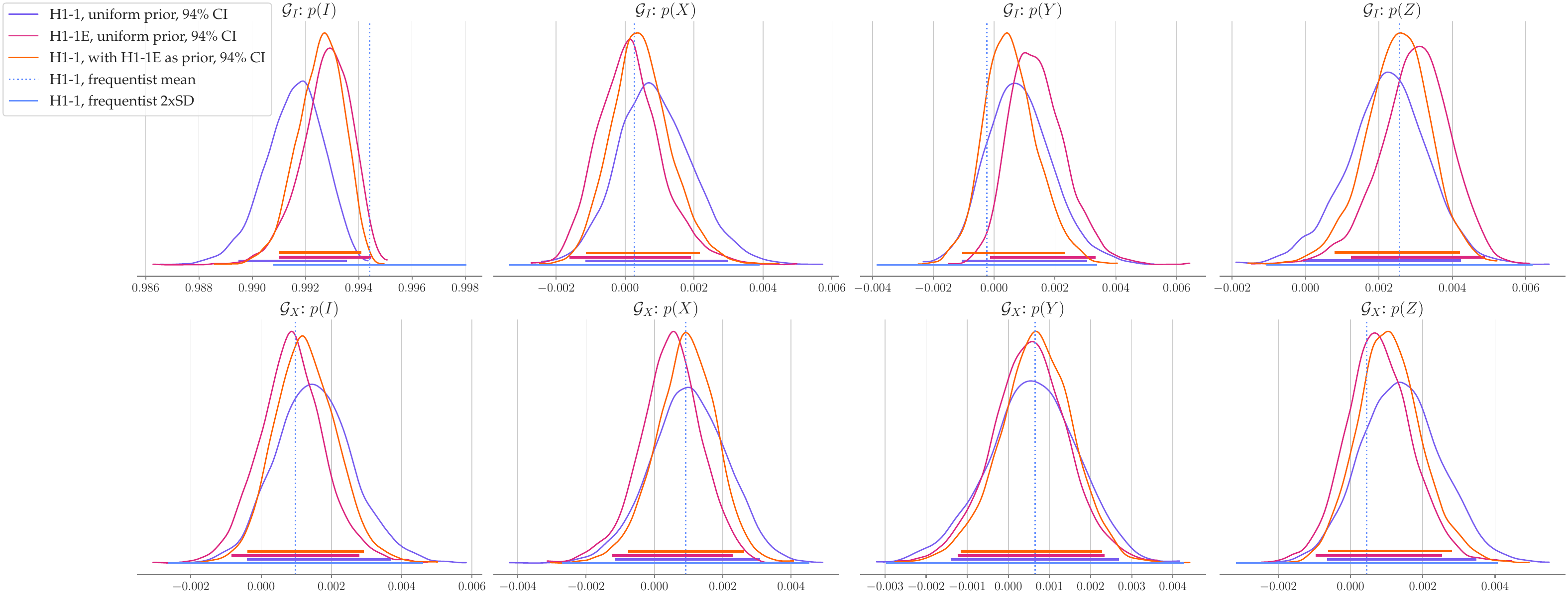}
		\caption{ \textbf{$\G(\rho) := \sum_{\m} \P^{\m}(\rho) = \G_{I}(\rho) + IX \G_{X}(\rho) IX $}, average channel per gadget.  }
    \end{subfigure}

   \caption{\textbf{Bayesian estimation with informative priors applied to the $\nkd{2}{1}{1}$ code.}
   Marginal posterior distributions for the physical channels $\E^{\d}(\rho) := \E^\d_{I}(\rho) + IX \E^\d_{X}(\rho)IX$ are given for two datasets, H1-1 and H1-1E,  with Bayesian post-processing (see Figure~\ref{fig:bayes-model}) using uniform priors on all eigenvalues.
   We then perform alternative post-processing for H1-1 with informative priors; where we use the posterior distributions for all eigenvalues for H1-1E as prior for H1-1.
   Compared to uniform priors, this gives increased precision.
   Data gathered with the LSD-DRT protocol, see Table \ref{tab:exp-settings}.  }
   \label{fig:LSD-2,1,1-bayes}
\end{figure*}

\begin{figure*}[h]
    \begin{subfigure}{0.48\textwidth}
        \centering
        \includegraphics[width=\textwidth]{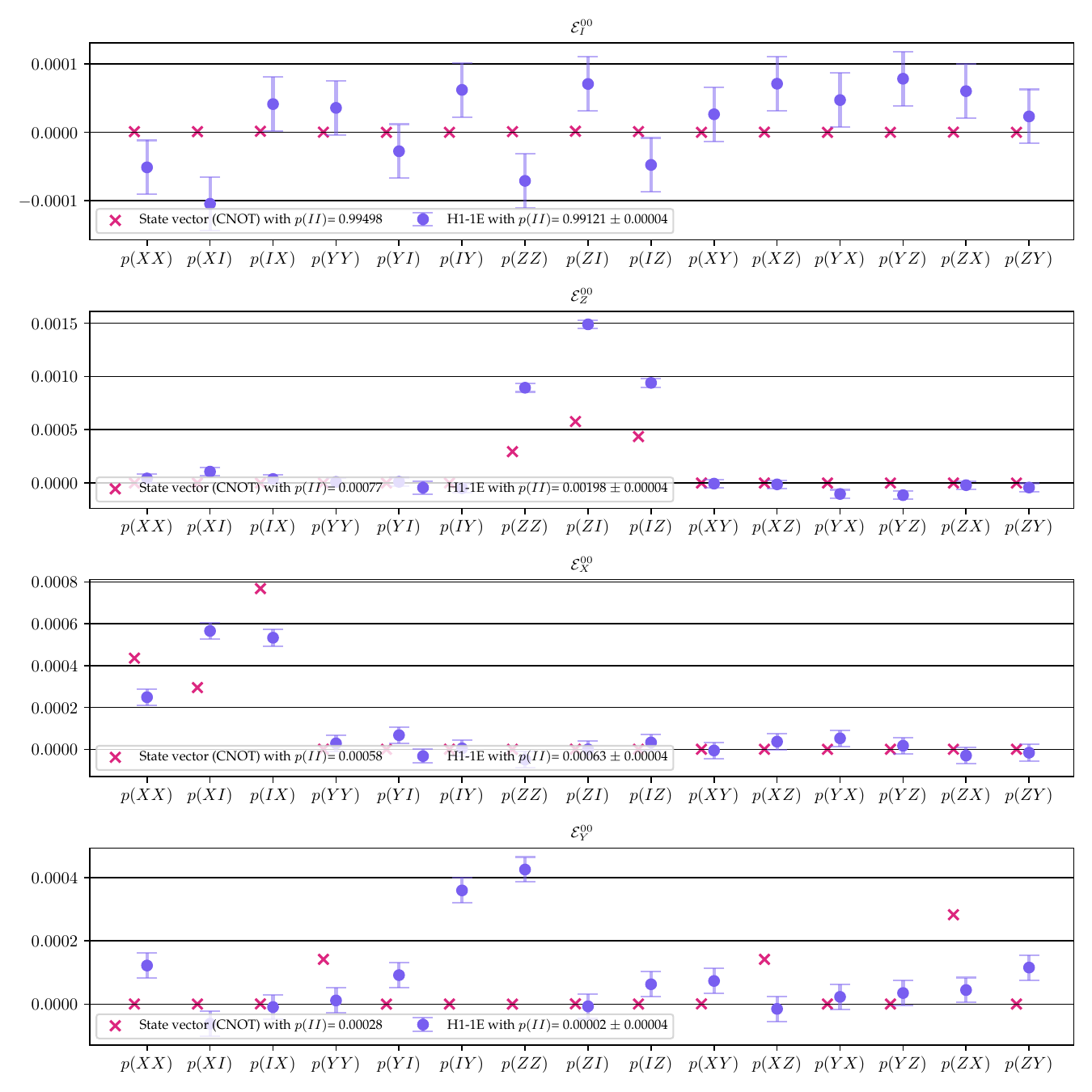}
		\caption{ \textbf{$\E^{00}$} with $p_{\text{H1-1E}}(\d=00) = 0.9672 \pm 0.0001$ and $p_{\text{state vector}}(\d=00) = 0.9882$. }
    \end{subfigure}
	\hspace{5mm}
    \begin{subfigure}{0.48\textwidth}
		\includegraphics[width=\textwidth]{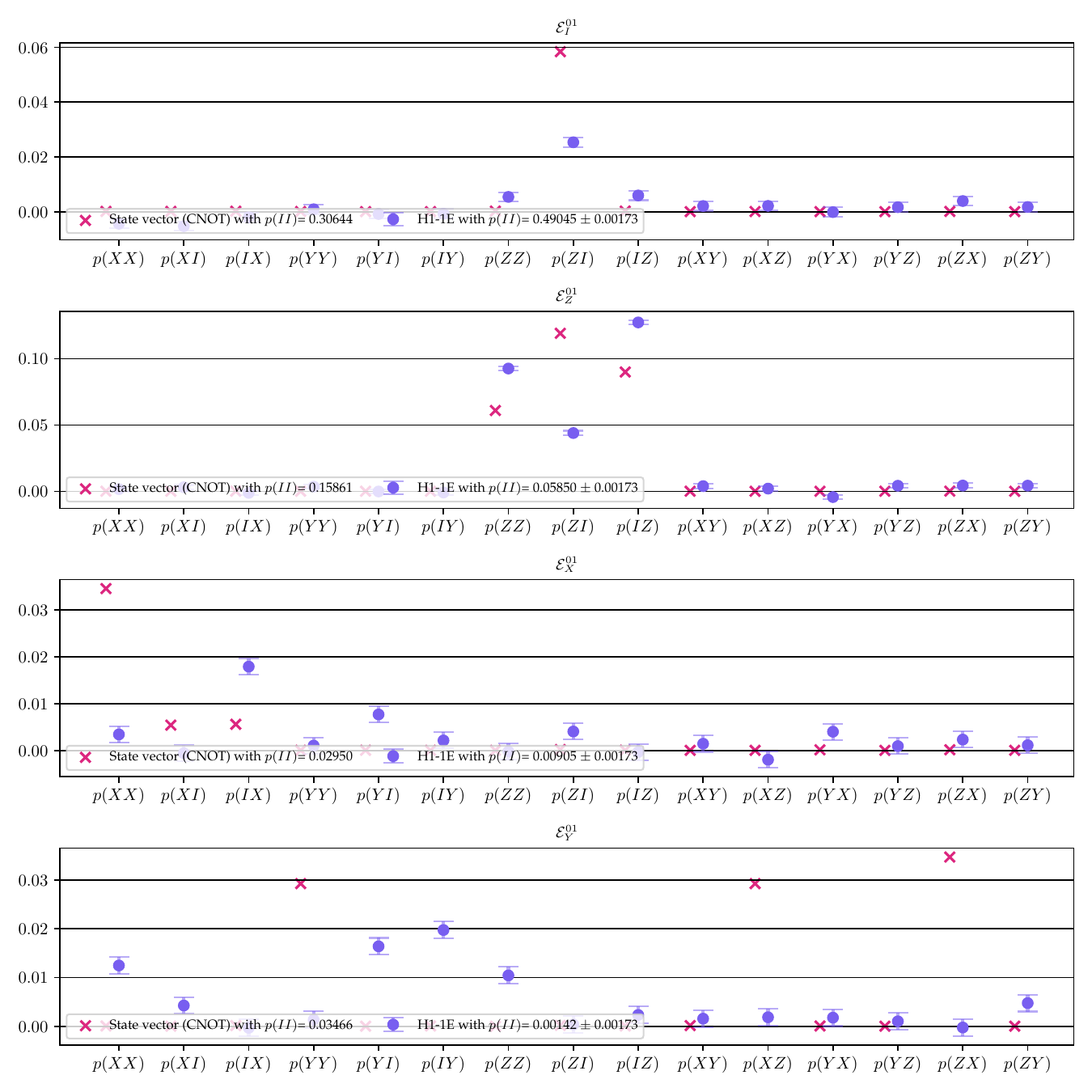}
		\caption{ \textbf{ $\E^{01}$} with $p_{\text{H1-1E}}(\d=01) = 0.0160 \pm 0.0001$ and $p_{\text{state vector}}(\d=01) = 0.0048$.   }
    \end{subfigure}

	\vspace{5mm}

	\begin{subfigure}{0.48\textwidth}
        \centering
        \includegraphics[width=\textwidth]{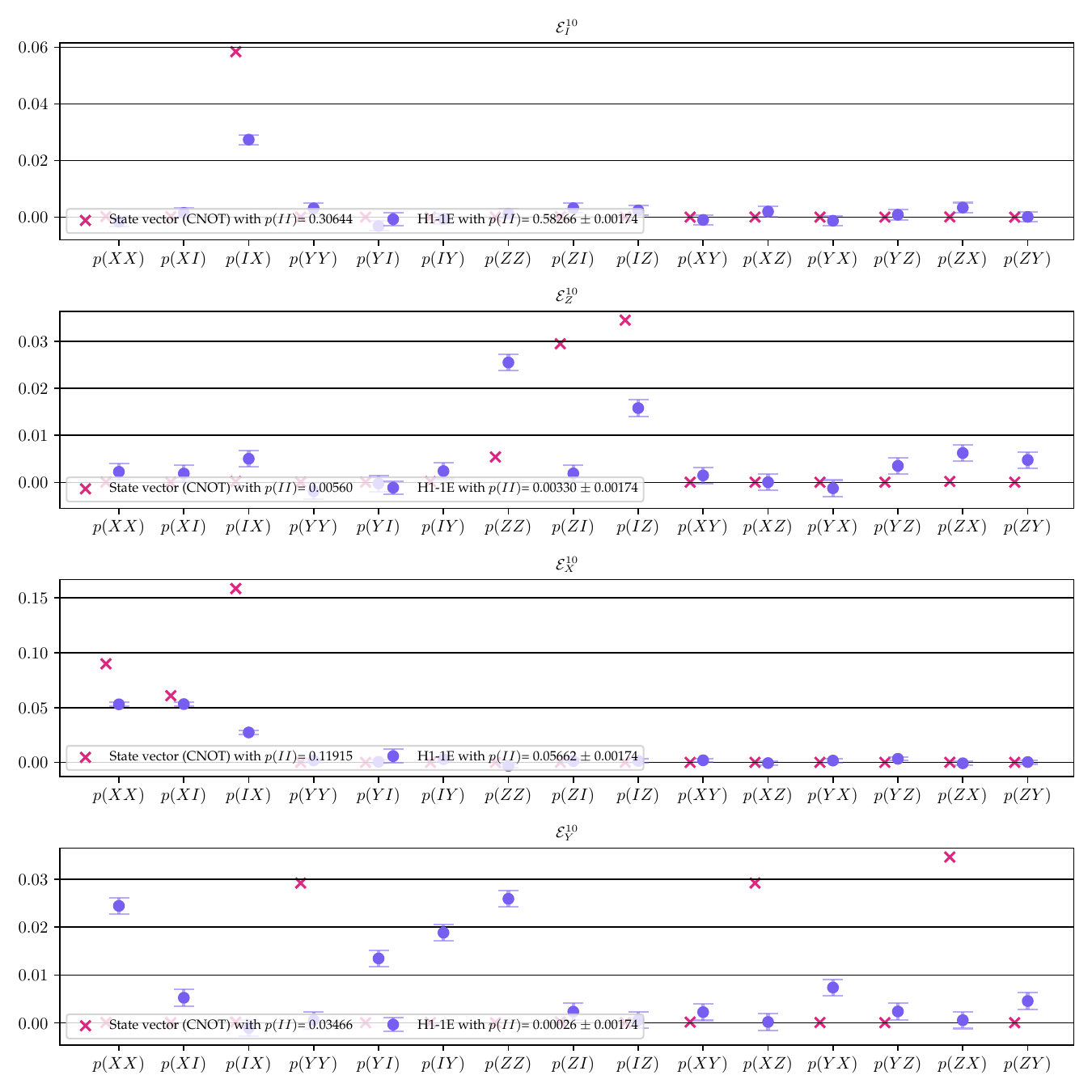}
		\caption{ \textbf{ $\E^{10}$} with $p_{\text{H1-1E}}(\d=10) = 0.0121 \pm 0.0001$ and $p_{\text{state vector}}(\d=10) = 0.0048$.   }
    \end{subfigure}
	\hspace{5mm}
    \begin{subfigure}{0.48\textwidth}
		\includegraphics[width=\textwidth]{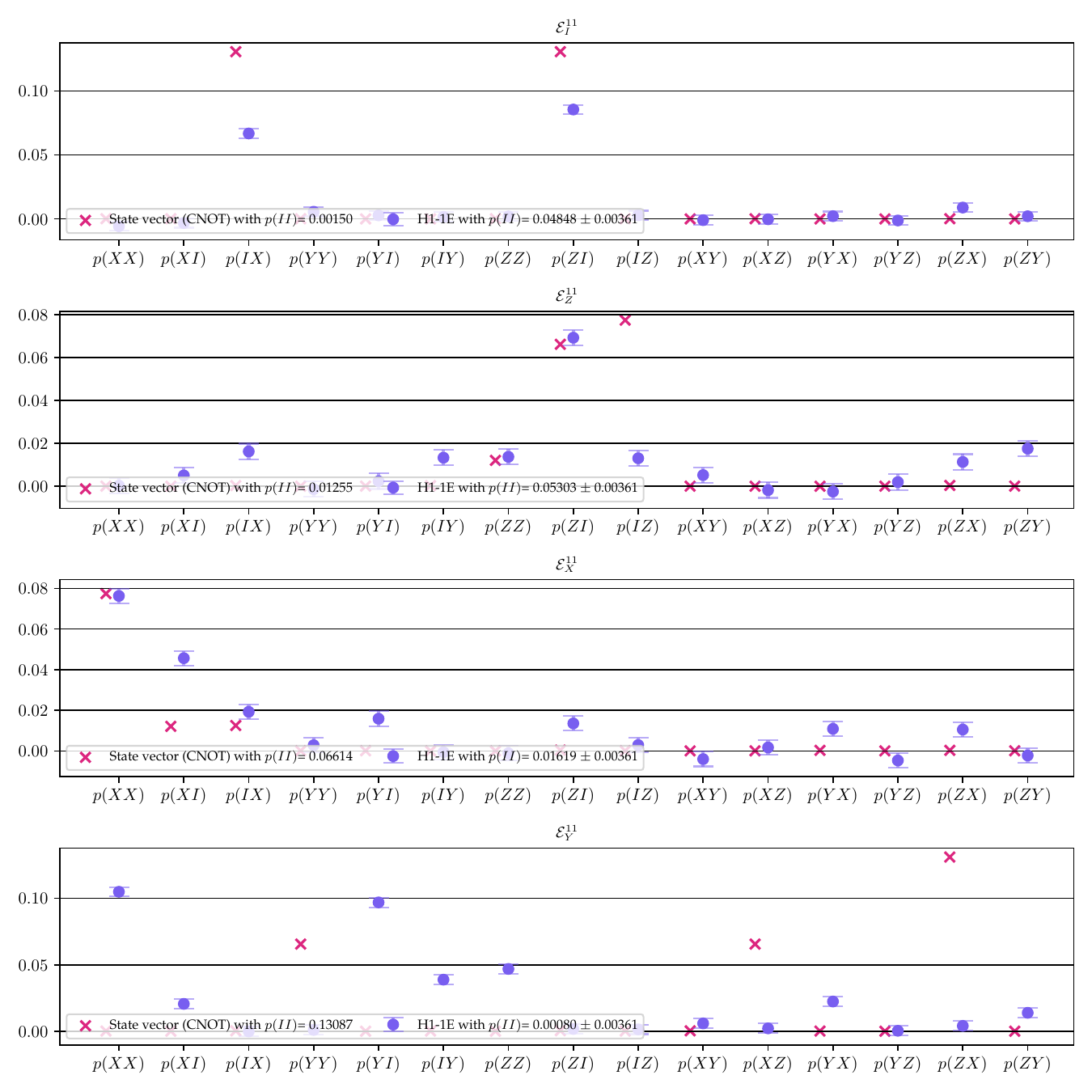}
		\caption{ \textbf{ $\E^{11}$} with $p_{\text{H1-1E}}(\d=11) = 0.0047 \pm 0.0001$ and $p_{\text{state vector}}(\d=10) = 0.0021$.   }
    \end{subfigure}

   \caption{\textbf{Frequentist LSD-DRT for the $\nkd{4}{2}{2}$ code.}
   The four physical channels $\{\E^{\d}\}$ are estimated, each with four logical components $\E^{\d}(\rho) := \E^\d_{I}(\rho) + IIIX \E^\d_{X}(\rho)IIIX + IIIZ \E^\d_{Z}(\rho)IIIZ + IIIY \E^\d_{Y}(\rho)IIIY$.
   Data for H1-1E is gathered with the LSD-DRT protocol, see Table \ref{tab:exp-settings}, whereas numerical simulations with a CNOT error model are conducted without finite sampling.
   }
   \label{fig:LSD-4,2,2-frequentist}
\end{figure*}

\begin{figure*}[h]
    \includegraphics[width=0.48\textwidth]{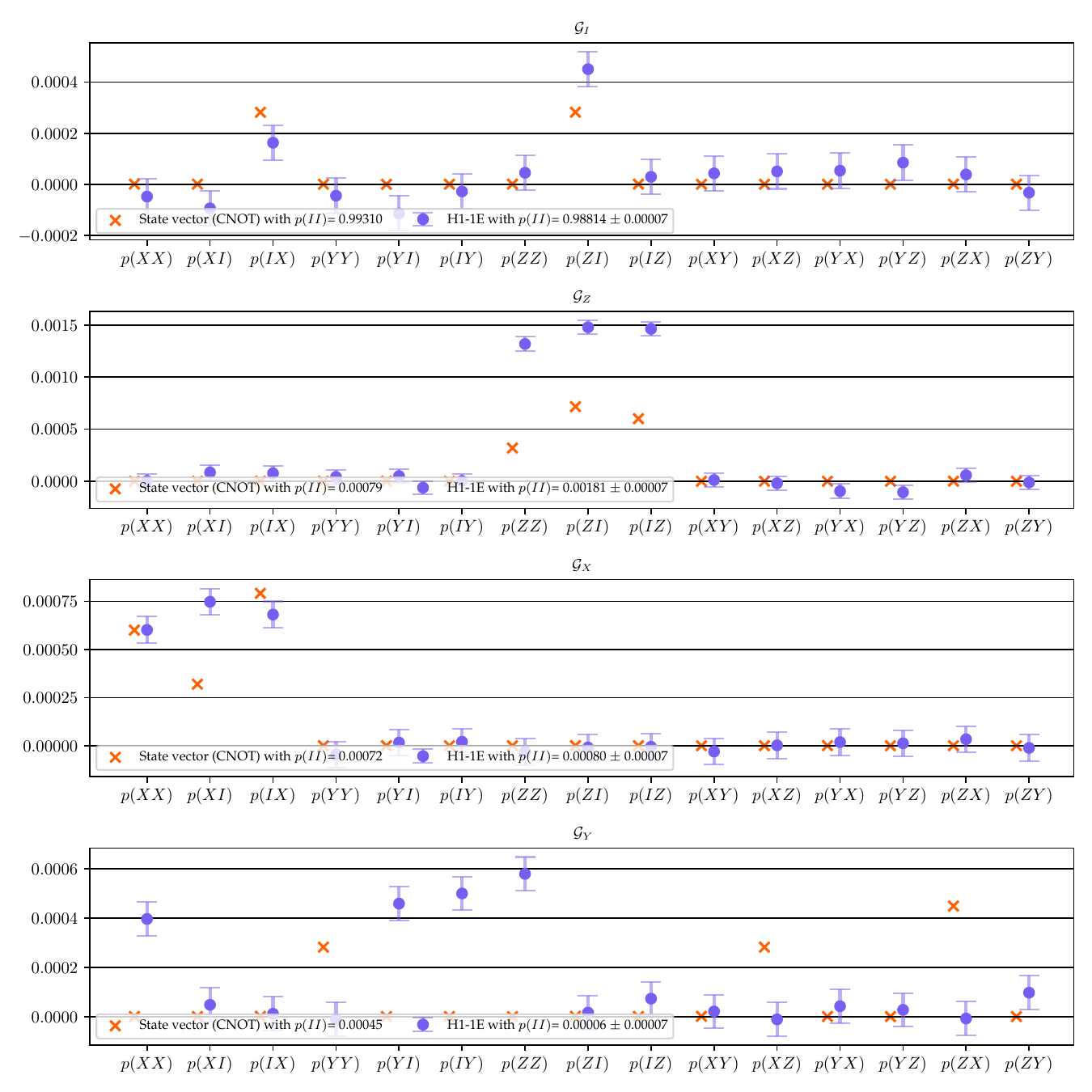}
   \caption{\textbf{Frequentist LSD-DRT for the $\nkd{4}{2}{2}$ code, average case.}
   \textbf{$\G(\rho) := \sum_{\m} \P^{\m}(\rho) = \G_{I}(\rho) + IX \G_{X}(\rho) IX $}, the average channel per gadget is estimated.
   Data for H1-1E is gathered with the LSD-DRT protocol, see Table \ref{tab:exp-settings}, whereas numerical simulations with a CNOT error model are conducted without finite sampling.
   }
   \label{fig:LSD-4,2,2-frequentist-average}
\end{figure*}

\begin{figure}[h]
    \begin{subfigure}{0.47\textwidth}
        \centering
        \begin{quantikz}[row sep=0.2cm, column sep=0.1cm]
            \lstick{$0$}			& \targ{} 	& \qw 		& \ctrl{4}	& \qw 		& \qw 		& \qw 		& \qw 		& \qw  		& \rstick{$0$} \qw  \\
            \lstick{$1$}			& \qw 		& \ctrl{3} 	& \qw		& \targ{} 	& \qw 		& \qw 		& \qw 		& \qw 		& \rstick{$1$} \qw  \\
            \lstick{$2$}			& \qw 		& \qw 		& \qw 		& \qw	 	& \qw 		& \targ{} 	& \ctrl{2}  &  \qw		& \rstick{$2$} \qw  \\
            \lstick{$3$}			& \qw 		& \qw 		& \qw 		& \qw 		& \ctrl{1} 	& \qw 		& \qw 		& \targ{}   & \rstick{$3$} \qw  \\
            \lstick{$\ket{0}$}  & \qw 		&  \targ{}  & \targ{} 	& \qw 		& \targ{} 	& \qw 		& \targ{} 	& \qw	  	&  \meter{} \rstick{$\scriptstyle \pm Z$}  \\
            \lstick{$\ket{+}$}  & \ctrl{-5} & \qw 		& \qw		& \ctrl{-4} & \qw 		& \ctrl{-3} & \qw 	 	& \ctrl{-2} &  \meter{} \rstick{$\scriptstyle \pm X$}  
		\end{quantikz}
            \caption{\textbf{Gadget of depth 4.}
            A weight-1 error on the auxiliary qubits before the first 4 gates is flagged by the syndrome measurements.
            A weight-1 error before the final auxiliary gates propagates to a weight-1 error on the data qubits, which is caught by the next round of error detection.}
            \label{fig:fault-tolerant-422-gadget}
    \end{subfigure}
	\hspace{0.2cm}
    \begin{subfigure}{0.47\textwidth}
        \begin{quantikz}[row sep=0.2cm, column sep=0.1cm]
            \lstick{0}			& \targ{} 	& \qw 		& \ctrl{4}	& \qw 		& \qw 		& \qw 		& \qw 		& \qw 		& \qw  		& \qw 		& \rstick{0} \qw  \\
            \lstick{1}			& \qw     	& \ctrl{3} 	& \qw		& \targ{} 	& \qw 		& \qw 		& \qw 		& \qw 		& \qw 		& \qw 		& \rstick{1} \qw  \\
            \lstick{2}			& \qw     	& \qw 		& \qw 		& \qw	 	& \qw  	    & \targ{} 	& \qw		& \targ{} 	& \qw		& \ctrl{2} 	& \meter{} \rstick{$\scriptstyle \pm Z$}  \\
            \lstick{3}			& \qw 	  	& \qw 		& \qw 		& \qw 		& \ctrl{1} 	& \qw 		& \ctrl{2} 	& \qw 		& \targ{}   & \qw 		& \meter{} \rstick{$\scriptstyle \pm X$}  \\
            \lstick{$\ket{0}$}  & \qw 		&  \targ{}  & \targ{} 	& \qw 		& \targ{} 	& \qw 		& \qw 		& \ctrl{-2} & \qw	  	& \targ{} 	& \rstick{2} \qw  \\
            \lstick{$\ket{+}$}  & \ctrl{-5} & \qw 		& \qw		& \ctrl{-4} & \qw 		& \ctrl{-3} & \targ{} 	& \qw 		& \ctrl{-2} & \qw 		& \rstick{3} \qw
            \end{quantikz}
            \caption{\textbf{Leakage-protected (LP) gadget of depth 5.}
            Similar to Figure~\ref{fig:lp-fault-tolerant-211-gadget}, we swap out physical qubits.}
            \label{fig:lp-fault-tolerant-422-gadget}
    \end{subfigure}
   \caption{\textbf{Fault tolerant flag syndrome extraction gadgets for the $\nkd{4}{2}{2}$ code.}}\label{fig:422-gadgets}
\end{figure}

\end{document}